\numberwithin{equation}{subsection}
\newcommand{\be}{\begin{equation}}
\newcommand{\ee}{\end{equation}}
\newtheorem{proposition}{Proposition}
\newcommand{\cA}{{\cal A}}
\newcommand{\cC}{{\cal C}}
\newcommand{\cD}{{\cal D}}
\newcommand{\cL}{{\cal L}}
\newcommand{\cH}{{\cal H}}
\newcommand{\cV}{{\cal V}}
\newcommand{\cS}{{\cal S}}
\newcommand{\cQ}{{\cal Q}}
\newcommand{\cU}{{\cal U}}
\newcommand{\cK}{{\cal K}}
\def\so{\mathfrak{so}}
\def\sl{\mathfrak{sl}}
\def\su{\mathfrak{su}}
\def\bbR{\mathbb{R}}
\def\bbC{\mathbb{C}}
\def\bbZ{\mathbb{Z}}
\def\bbN{\mathbb{N}}
\def\SL{\mathrm{SL}}
\def\SO{\mathrm{SO}}
\def\dd{\mathrm{d}}
\def\s{\mathrm{s}}
\def\i{\mathrm{i}}
\def\M{\mathrm{M}}
\def\dh{\mathrm{d} h}
\def\dg{\mathrm{d} g}
\def\dx{\mathrm{d} x}
\def\tn{\tilde n}
\def\on{\bar n}
\def\la{\langle}
\def\ra{\rangle}
\def\ot{\otimes}
\def\mone{^{-1}}
\def\adag{a^{\dagger}}
\definecolor{darkblue}{rgb}{0.0, 0.3, 0.65}
\definecolor{darkred}{rgb}{0.65, 0.0, 0.0}
\definecolor{darkgreen}{rgb}{0.0, 0.65, 0.0}
\definecolor{Darkblue}{rgb}{0.0, 0.0, 0.3}
\definecolor{Darkred}{rgb}{0.3, 0.0, 0.0}
\definecolor{Darkgreen}{rgb}{0.0, 0.3, 0.0}
\definecolor{lightred}{rgb}{1.0,0.65,0.65}
\definecolor{deepblue}{rgb}{0.0,0.8,0.8}
\definecolor{sky}{rgb}{0.0,0.75,1}
\definecolor{cargreen}{rgb}{0.0, 0.8, 0.6}
\definecolor{aquamarine}{rgb}{0.5,1.0,0.83}
\definecolor{darklavander}{rgb}{0.8,0.8,1.0}
\definecolor{lavander}{rgb}{0.9,0.9,0.98}
\definecolor{lightlavander}{rgb}{0.95,0.95,0.99}
\definecolor{mauve}{rgb}{0.86,0.82,1.0}
\definecolor{platinum}{rgb}{0.9,0.89,0.89}
\definecolor{eigengrau}{RGB}{22,22,29}
\renewenvironment{abstract}{%
    \if@twocolumn
      \section*{\abstractname}%
    \else %% <- here I've removed \small
      \begin{center}%
        {\bfseries\sffamily\abstractname\vspace{\z@}}%  %% <- here I've added \Large
      \end{center}%
      \quotation
    \fi}
    {\if@twocolumn\else\endquotation\fi}
\title{A state sum for four-dimensional Lorentzian quantum geometry in terms of edge vectors}
\begin{document}

\author[1]{\sffamily Roukaya Dekhil\thanks{roukaya.dekhil@unifi.it}}
\author[2,3]{\sffamily Matteo Laudonio\thanks{matteo.laudonio.6@gmail.com}\thanks{matteo.laudonio@u-bordeaux.fr}}
\author[4,5]{\sffamily Daniele Oriti\thanks{doriti@ucm.es}}

\affil[1]{\small Universit\'a degli Studi di Firenze,\\ Piazza di San Marco, 4, 50121 Firenze FI, Italy, EU}
\affil[2]{\small Univ. Bordeaux, LABRI, 351 Cours de la Libération, 33400 Talence, France}
\affil[3]{\small Department of Applied Mathematics, University of Waterloo, 200 University Avenue West, Waterloo, Ontario, Canada, N2L 3G1}
\affil[4]{\small Depto. de %Física Teórica,
Facultad de Ciencias %Físicas, 
Universidad Complutense de Madrid, Plaza de las Ciencias 1, 28040 Madrid, Spain, EU}
\affil[5]{\small  Munich Center for Quantum Science and Technology (MCQST), Schellingstrasse 4, 80799 Munich, Germany}

\date{}
\maketitle
\begin{abstract}
We present the construction of a new state sum model for $4d$ Lorentzian quantum gravity based on the description of quantum simplicial geometry in terms of edge vectors. Quantum states and amplitudes for simplicial geometry are built from irreducible representations of the translation group, then related to the representations of the Lorentz group via expansors, leading to interesting (and intricate) non-commutative structures. We also show how the new model connects to the Lorentzian Barrett-Crane spin foam model, formulated in terms of quantized triangle bivectors. \end{abstract}
\newpage
\tableofcontents
\newpage
\section{Introduction}
\label{intro}
Several strategies for the construction of a non-perturbative and background independent theory of quantum gravity, including canonical loop quantum gravity (LQG) \cite{Rovelli:2004LQGBook, Thiemann:2007LQGBook, Pereze_qunatumgracity_notes}, topological quantum field theories (TQFT) \cite{Witten:19883dGravityTopological,Barrett:1995QuantumGravityTopological, Crane1993,CraneKauffmanYetter:1994StateSum4d}, lattice approaches \cite{Barrett:2018ybl, Dittrich:2008ar} and group field theory (GFT)\cite{Oriti:2014GFTLQG,Oriti:2011GFT, FreidelGFT, Krajewski:2011GFT}, converge to the formalism of state sum or spin foam models for quantum geometry
\cite{Baez_1998,baezSF, Perez:2013SpinFoam, Engle_2014SF, Oriti_2001SF,roveliSF}. 
The formalism is based on the use of lattice discretizations of spacetime and geometry, which are then quantized, often following the guideline of continuum gravitational actions, but framing quantum geometry in purely algebraic (and combinatorial) terms, encoding it in the representation theory of appropriately chosen Lie groups.  
The most popular spin foam models are constructed as the discrete (and algebraic) quantum counterpart of the formulations of continuum gravity as a constrained BF theory, i.e. the Plebanski and Plebanski-Holst actions, in both Riemannian and Lorentzian contexts. The first led to the Barrett-Crane model \cite{Barrett1998,BaratinOriti:2011GFTBarrettCrane,Jercher:2022mky}, while the second led to the EPRL models \cite{EngleLivinePereiraRovelli:2007EPRL} and the Baratin-Oriti one \cite{BaratinOritimodel} (in the Riemannian case). 
In these models, the geometricity (or \lq\lq simplicity\rq\rq) constraints reducing topological BF theory to gravity are imposed on the discrete counterpart of the B field of BF theory, i.e. bivectors associated with the triangles of a simplicial complex, at the quantum level, upon identification of the same bivectors with Lie algebra elements of the Lorentz group. 
At the simplicial geometry level, these constraints are expected to allow to invert the set of bivectors to a set of edge vectors, representing the discrete counterpart of the tetrad field of continuum gravity, thus ensuring the complete and proper encoding of (quantum) simplicial geometry. This proper encoding should then also be reflected, remaining at the discrete level, in a semiclassical approximation reproducing the Regge calculus dynamics for piecewise-flat geometries, i.e. the direct discrete counterpart of continuum GR.\\
Much of the work on these models in the last decades has been devoted to the analysis of the quantum amplitudes of these spin foam models, aiming at proving the proper encoding of simplicial geometry. This is a non-trivial task, by all means, since the imposition of the geometricity constraints at the quantum level is a complex and inevitably ambiguous procedure, and the resulting quantum amplitudes are highly involved from the analytic point of view.\\ 
The results of this endeavor have been many and important (see the cited reviews and references therein), but not fully conclusive, in several aspects, especially for generic lattices. The connection with quantum metric geometries, thus beyond the semiclassical regime, while still at the discrete level, is less under control, and thus the precise nature of quantum corrections. And the above issues, while crucial for the geometric understanding of spin foam amplitudes in the discrete, have consequences for the continuum approximation of the quantum gravity dynamics, which are obviously key aspects of interest, and on which we need to gain control.  
A conceptually straightforward way to solve the issues faced by spin foam models, in this respect, would be of course to depart from the conventional strategy based on constrained BF formulations and quantum bivectors, and thus on the representation theory of the Lorentz group, and construct directly a state sum model based on the quantization of edge vectors. From the point of view of continuum gravity, this would be the counterpart of a direct quantization of Palatini gravity. Such departure would need to be based on appropriate new analytic tools and a different algebraic framework, which would then have to be connected to the usual one.
A proposal in this direction has been made early on, in fact, by Crane and Yetter \cite{CraneYetter}. They proposed a model based on infinite dimensional irreducible representations (based on \lq expansors\rq) of the Lorentz group acting on vectors in Minkowski space (first introduced by Dirac \cite{Dirac}), with the quantization of Minkowski space realized in terms of representations of the translation group, to be then combined to define the quantization of bivector.  
In this paper, we give an explicit realization of these early suggestions, and put forward a new state sum model of Lorentzian $4d$ quantum geometry based on edge vectors, and on the combined representation theory of the translation and Lorentz groups.\\

The paper is organized as follows. In section \ref{Sec_InfiniteDimReps} we lay down the main elements from the theory of infinite dimensional unitary representations of the Lorentz group that will eventually be associated with the edge vectors as well as the bivectors of the 2-simplex of the new model. We identify the displacement vector on the edge vectors as functions of the translation group on the Minkowski space, and hence we sketch the mathematics behind it, where we relate the Lorentzian harmonic oscillators via their coherent states to the representations of the translation group. We are then equipped with the necessary tools to identify the quantum description of a triangle based on quantum edge vectors as well as the skew-symmetric combination thereof (a bivector). We then move up to the definition of a quantum tetrahedron, in section \ref{Sec_QuantumTetrahedron}, where we also discuss the differences between our description of the quantum tetrahedron with the standard one, which uses quantum bivectors. In section \ref{Sec_SF-EdgeVectors} we employ the obtained quantum states for the quantum tetrahedron as the seeds to write down the new spin foam amplitude. We then show how to relate the new quantum amplitudes to the Barrett-Crane quantization of bivectors, and the resulting 4-simplex amplitude, from quantum edge vectors. Finally, in section \ref{SubSec_GFTEdgeVectors} we complete the definition of the new model of $4d$ Lorentzian quantum geometry, embedding its quantum amplitudes in a GFT formulation, based on elements (and representations) of the translation group.

\section{Relevant representation theory of translation and Lorentz groups}

\label{Sec_InfiniteDimReps}
The group theoretic quantization of edge vectors and their combinations to encode the (piecewise-flat) geometry of a simplicial complex is a central point of our construction. The possibility to quantize simplicial geometry in group theoretic terms, thus obtaining a purely algebraic state sum model associated with simplicial lattices is, more generally, the key fact at the core of the spin foam formalism for quantum gravity \cite{baezSF,roveliSF,Perez:2013SpinFoam} and of the group field theory formalism as well \cite{Oriti:2011GFT,Oriti:2014GFTLQG, BaratinOritimodel,Reisen_Rovelli_Feynman}. 

This group-theoretic quantization of edge vectors is based on the translation group. The precise relation between the representation theory of the translation group and that of the Lorentz group, on the other hand, is the basis for relating our description of quantum geometry, as well as the amplitudes of the new model, to that of the known spin foam models. The latter is based on a description of simplicial geometry in terms of bivectors associated with triangles and their quantization in terms of the representation theory of the Lorentz group. 

The tools from the representation theory of the translation group and the way they relate to the Lorentz group are interesting in themselves, from a mathematical point of view. Therefore, we illustrate them in some detail in this section, before proceeding to the construction of the new model.
Specifically, the relevant representations are the unitary infinite dimensional representations of both groups. We start from those of the Lorentz group \cite{werner,gelfand_representation, Dirac}, the mathematical basis of known spin foam models for Lorentzian $4d$ gravity.

We first present the representations of the principal series \cite{Barrett1999, DaoNguyen} for the $\SL(2,\bbC)$ group (the covering group of the Lorentz group) then we recall how Dirac initially introduced the infinite dimensional representations for a non-compact group. 
Dirac's construction in terms of expansors \cite{Dirac,Chandra} is then presented; it is particularly relevant to our discussion since it leads to the infinite dimensional representations of the translation group, which we discuss in the second part of this section. The underlying relation between the representations in the principle series for the Lorentz group and those of the translation group becomes evident once we exploit properties of the harmonic oscillator basis, which we deal with in the last part of this section. We remark that one can work in the spinor basis as well, in order to realize the irreducible representations of the Lorentz group, as studied in \cite{Chandra}. The reader who is familiar with the representation theory of the translation and Lorentz group, including the formalism of expansors, and who wants to see directly how these mathematical elements enter the description of quantum simplicial geometry and the construction of the new model, can skip this section and move directly to the following one.

\subsection{Infinite dimensional unitary representations of the Lorentz group}
\label{SubSec_InfiniteDimRepsLorentz}

 Let us first recall the realization of the infinite dimensional unitary representations of the Lorentz group and its algebra in the space of homogeneous functions, and then explore the Plancherel decomposition of the various classes of such representations. We then present the alternative representation of the Lorentz group in terms of expansors, introduced by Dirac in \cite{Dirac, Chandra}.
Finite dimensional representations of the $\SL(2,\bbC)$ group, i.e. tensor and spinor representations, are the ones mostly used in theoretical physics. However, these are not unitary and the problem of identifying finite unitary representations remains unsolved \cite{werner,pierre}. 
On the other hand, infinite dimensional representations of the Lorentz group have been extensively studied in \cite{gelfand_representation, Chandra,DaoNguyen}, and unitary irreducible representations were first derived in \cite{Dirac}. An element $g\in \SL(2,\bbC)$ can be described by the matrix: 
\be
    g =
    \begin{pmatrix}
        \alpha & \beta \\
        \gamma & \delta 
    \end{pmatrix} \,,
\ee
with $\alpha,\beta,\gamma,\delta \in \bbC$ satisfying the relation $\alpha\delta - \beta\gamma = 1$. A representation of the group, denoted $R_\chi(g)$, can then be given by its action on the set of homogeneous polynomial functions $\phi$ of two complex variables $z_1,z_2 \in \bbC$ of order $n_1-1$ in $z_1$ and $n_2-1$ in $z_2$ where $n_i\in \bbN$. We denote by $\chi$  the pair of numbers $\chi=\left(n_1,n_2\right)$. Such action takes the form:
\be
\label{Eq1:rep_theory}
    R_\chi(g) \, \phi(z_1,z_2) = \phi(\alpha z_1 + \gamma z_2, \beta z_1 + \delta z_2) \,.
\ee

Among the $\SL(2,\bbC)$ infinite dimensional representations, we can distinguish the \textit{unitary} ones, which form the so-called \textit{principal series}. Such representations are denoted by $R_{j \, \mu}(g)$, labelled by the half integer $j$ and the real number $\mu \in \bbR$, and again the $\SL(2,\bbC)$ transformations are specified by the action of $R_{i \, \bar\mu}$ on the polynomials of degree $(\frac{1}{2}(\mu+j),\frac{1}{2}(\mu-j))$.

The only isomorphism between such representations is given by $R_{j \, \mu} = R_{-j \, -\mu}$. The representations in the principal series are further classified according to the value of $j$ into \textit{bosonic} integer $j$ and \textit{fermionic} for $j$ half-integer. 

Throughout this paper, when discussing the unitary infinite dimensional representations of the Lorentz group, we will thus refer to the bosonic representations in the principal series of $\SL(2,\bbC)$. 
\smallskip \\
We also note that the irreducible representations of the principal series $R_{j \, \mu}$ correspond to the canonical $D$-matrices $D^{j \, \mu}_{\ell,m;\ell',m'}$. These can be realized as the expectation value of a general unitary operator $U(g)$ associated with the transformation $g \in \SO(1,3)$, in the (canonical) orthogonal basis $|\, j , \mu ; \ell' , m' \ra$ of the Lorentz group 
\be
    D^{j \, \mu}_{\ell,m;\ell',m'}(g) =
    \la j , \mu ; \ell , m \, | U(g) |\, j , \mu ; \ell' , m' \ra \,.
    \label{CanonicalD-Matrices}
\ee
where $(j\, , \mu)$ are the representation labels, whereas $(l\, ,m)$ are the angular momentum- and magnetic quantum numbers. \\
In the following, the so-called \textit{balanced} representations among the principal series of the Lorentz representations will play a central role. As we will see in section \ref{Sec_QuantumBiVector}, we are interested in triangles whose geometry is encoded by a \textit{simple} bivector \cite{Reisenberger:1998SpinNetworkRelativistic, Barrett1999}. On the level of the associated representation, it is obtained by setting the representation labels $j=0$ or $\mu=0$. According to \cite{Barrett1999,gelfand_representation,PerezRovelli}, a natural expression for such representations is given by the Gelfand-Graev transformations on the hyperboloids in Minkowski space. Let $x_{\mu} \in \bbR^4$ be the embedding coordinates for any such hyperboloid, with scalar product $x \cdot x = x_0^2 - \textbf{x}^2$, with $\textbf{x}^2 = x_1^2 + x_2^2 + x_3^2$. Let us consider three hyperboloids: $Q_1$ given by $x \cdot x = 1$ and $x_0 > 0$, the null positive cone $Q_0$ given by $x \cdot x = 0$ and $x_0 > 0$, and the de-Sitter space $Q_{-1}$ given by $x \cdot x =-1$. The Fourier decomposition for the square-integrable functions on the three hyperboloids is given by the Plancherel theorem:
\be
    \begin{aligned}
        L^2[Q_1] & =
        \bigoplus_{\mu} \, R_{0 \, \mu} \, \dd \mu \, \mu^2 \,,\\
        L^2[Q_0] & =
        2 \, \bigoplus_{\mu} \, R_{0 \, \mu} \, \dd \mu \, \mu^2 \,,\\
        L^2[Q_{-1}] & =
        \bigg(2 \,\bigoplus_{\mu} \, R_{0 \, \mu} \, \dd \mu \, \mu^2\bigg) \bigoplus
        \bigg(\bigoplus_j \, R_{j \, 0}\bigg) \,,
    \end{aligned}
    \label{PlancharelDecomposition}
\ee
where $\dd \mu \, \mu^2$ is the Plancherel measure for $j = 0$. The Plancherel theorem thus provides a definition of the Hilbert spaces of (square-integrable) functions on the three hyperboloids based on a combination of the Lorentz balanced (simple) representations.\\
For completeness, looking at the Lorentz algebra, we can notice that it is isomorphic to $\sl(2,\bbC)$ or $\so(3,\bbC)$, both considered as real algebras. Hence, the Lorentz algebra generators can be written as a combination of a rotation $J$ and a boost $N$. In components, this translates to writing a general element of the algebra as $L_{ab} = J_{ab} + \i N_{ab}$. Furthermore, there exist two invariant inner scalar products on this Lie algebra given by:
\be
    \langle L \,,\, L \rangle = J^2 - N^2 
    \,\,,\quad
    \langle L \,,\, \ast L \rangle = 2 \, J \cdot N \,,
    \label{LorentzScalarProd}
\ee
where the symbol $\ast$ denotes the Hodge dual: $\ast L_{ab} = \frac{1}{2} {\varepsilon_{ab}}^{cd} L_{cd}$ and ${\varepsilon_{ab}}^{cd}$ is the $4d$ Levi-Civita tensor.
They lead to the Casimir elements with eigenvalues:
\be
    C_1 = j^2 - \mu^2 -1
    \,\,,\quad
    C_2 = 2 j \, \mu \,,
    \label{LorentzCasimirEigenvalue}
\ee
which can be recast as the real and imaginary parts of the complex quantity $\omega^2 - 1 = C_1 + \i C_2$, with $\omega = j + \i \mu$.

\noindent The known constructions of spin foam and group field theory models for Lorentzian quantum gravity are based on the above realization of Lorentz representations (indeed, quantum states, as well as the GFT field, are given by functions on (several copies of) the hyperboloids).

\noindent Let us now introduce another realization of the same representations of the Lorentz group.
Historically, these were introduced by Dirac in \cite{Dirac} and further studied in \cite{Chandra} in terms of so-called \textit{expansors}. They are tensors-like objects in a manifold with an infinite enumerable number of components. They also possess an invariant positive definite quadratic form for their squared length. Such representations are realized on the space of homogeneous polynomials on Minkowski space $\M^4$. 

\noindent The notions introduced in what follows will help to establish the relation between the representations of the Lorentz group and that of the translation group (geometrically, between the quantum bivectors and the quantum edge vectors).

\noindent We focus on homogeneous polynomials (of degree $n$) of the real vector $\xi_{\mu}$ built from monomials $\xi_x^i \xi_y^j \xi_x^k \xi_t^{-1-h}$, with $i,j,k,h \in \bbN$. Notice that, due to the negative power of the time coordinate $\xi_t$, the combination of such monomials is infinite dimensional. The homogeneous polynomial can be written in the form of a power series as
\be
    P(\xi_{\mu}) = \sum_{ijkh} \, A_{i \, j \, k \, h} \, \xi_x^i \xi_y^j \xi_z^k \xi_t^{-1-h} \,,
    \label{ExpansorDef}
\ee
where the coefficient $A$ is called expansor. These coefficients are regarded as the coordinates of vectors in an infinite dimensional space. 
On this space of homogeneous functions, one can define the unitary representations of the Lorentz group, with the unitarity condition enforced by the scalar product:
\be
    P_1 \cdot P_2 = \sum_{ijkh} \, A_{i \, j \, k \, h} B_{i \, j \, k \, h} \,,
    \label{ScalarProduct}
\ee
that also provides a notion of norm given by the square length of the polynomial $|P|^2 = \sum_{ijkh} \, A^2_{i \, j \, k \, h}$.

\noindent In order to derive the representation of the Lorentz group on such expansors, one first considers the infinitesimal Lorentz transformations on the basis coordinates $\xi$. Demanding that the square length $\xi_{0}^{2}-\xi_{1}^{2}-\xi_{2}^{2}-\xi_{3}^{2}$ remains invariant under these transformations, one obtains the corresponding transformation for the expansors. It is precisely these induced linear transformations on the expansors that identify them as unitary representations of the Lorentz group.

\subsection{The Harmonic oscillator representation of the translation group}
\label{SubSec_InfiniteDimRepTranslations}

Dirac emphasized that the expansors can be interpreted as a tensor product of four harmonic oscillators. This is also crucial in the case of $n$-value integrals, where one can make a variable transformation that is familiar in quantum mechanics, relating the expansors in \eqref{ExpansorDef} with harmonic functions \cite{Dirac}. This can be seen explicitly considering the transformation map:
\be
    \begin{aligned}
        x_a & = 
        \frac{1}{\sqrt{2}}\big(\xi_a + \frac{\partial}{\partial \xi_a}\big) \,,\\
        x_t & =
        \frac{1}{\sqrt{2}}\big(\xi_t - \frac{\partial}{\partial \xi_t}\big) \,,
    \end{aligned}
    \qquad
    \begin{aligned}
        \frac{\partial}{\partial x_a} & = 
        \frac{1}{\sqrt{2}}\big(\frac{\partial}{\partial \xi_a} - \xi_a\big) \,,\\
        \frac{\partial}{\partial x_t} & =
        \frac{1}{\sqrt{2}}\big(\frac{\partial}{\partial \xi_t} + \xi_t\big) \,,
    \end{aligned}
    \label{InverseMap_OperatorCoordinates}
\ee
for $a = x,y,z$. This map implies the correct commutation relations between all the above-introduced operators. In the new $x-$variables, the homogeneous polynomial on Minkowski space \eqref{ExpansorDef} can be represented as a combination of polynomials in \eqref{InverseMap_OperatorCoordinates} which are basically a general composition of four Hermite functions:
\be
    P(x_{\mu}) = \sum_{ijkh} \, A_{i \, j \, k \, h} \, \Psi_{i \, j \, k \, h}(x_{\mu}) \,,
    \label{HomogeneousPolynomials_Harmonic}
\ee
where the expansors are the coefficient of such combination:
\be
    \begin{aligned}
        \Psi_{i \, j \, k \, h}(t,x,y,z) & =
        \frac{1}{\pi n! \sqrt{2^{i+j+k+h}}} \, 
        \big(x_i - \partial_{x_i}\big)^i \,
        \big(x_j - \partial_{x_j}\big)^j \,
        \big(x_k - \partial_{x_k}\big)^k \,
        \big(x_h - \partial_{x_h}\big)^h \,
        e^{- \frac{1}{2} (x_t^2 + x_ax^a)} \\
        & =
        \psi_{h}(t) \, \psi_{i}(x) \, \psi_{j}(y) \, \psi_{k}(z) \,.
    \end{aligned}
    \label{FourDimHarmonicOscillator}
\ee
In these terms, the scalar product \eqref{ScalarProduct} simplifies thanks to the orthogonality of the Hermite functions and it takes the integral form:
\be
    P_1 \cdot P_2 = \int \dx^4 \, 
    \psi_{h_1}(t) \, \psi_{i_1}(x) \, \psi_{j_1}(y) \, \psi_{k_1}(z) \,\, 
    \psi_{h_2}(t) \psi_{i_2}(x) \psi_{j_2}(y)  \psi_{k_2}(z)
    =
    \frac{i_1!j_1!k_1!}{h_1!} \delta_{i_1i_2} \delta_{j_1j_2}\delta_{k_1k_2} \, \delta_{h_1h_2} \,.
\ee
The alternative representation of the $\xi$ variables that Dirac introduced is related to the theory of the four-dimensional harmonic oscillator. 
Indeed, the four $x$-parameters can be treated as the coordinates of a four-dimensional harmonic oscillator, whereas the respective four operators $\partial_{x_\mu}$ are the associated conjugate momenta $p_{x_\mu}$. Furthermore, 
a state of the oscillator with components $0, 1, 2, 3$ occupying the $i$th, $j$th,
$k$th, $h$th quantum states respectively, is represented by $\Psi_{ijkh}$ given in \eqref{FourDimHarmonicOscillator}. Following the map (\ref{InverseMap_OperatorCoordinates}) one can get back the $\xi$-representation and the function $\Psi_{i \, j \, k \, h}(x_{\mu})$ goes over to $\ \xi_x^i \xi_y^j \xi_z^k \xi_t^{-1-h}$. Note that the degree of the expansor corresponds to the energy of the state of the harmonic oscillator, not including the zero-point energy.\\
Recalling the expressions of the ladder operators associated with a four-dimensional harmonic oscillator (see appendix (\ref{App_4dHarmonicOscillator})), we recognize that they can be written as the inverse relation of \eqref{InverseMap_OperatorCoordinates}:
\be
    \begin{aligned}
        &
        \adag_i = \xi_i = \frac{1}{\sqrt{2}}(x_i-\partial_i) \,,\\
        &
        a_i = \partial_{\xi_i} = \frac{1}{\sqrt{2}}(x_i+\partial_i) \,,
    \end{aligned}
    \qquad
    \begin{aligned}
        &
        \adag_0 = -\partial_{\xi_t} = \frac{1}{\sqrt{2}}(t-\partial_t) \,,\\
        &
        a_0 = \xi_t = \frac{1}{\sqrt{2}}(t+\partial_t) \,,
    \end{aligned}
    \label{Map_OperatorCoordinates}
\ee
where $i=1,2,3$ are the spacelike indices and the $0$ index stands for the timelike one. Notice how the Lorentzian signature is reflected in the relation between the creation and annihilation operators and the coordinates on Minkowski space. For completeness, the ladder operators always satisfy the canonical commutators
\be
\label{canonical_commutation_relation}
    [a_{\mu} \,,\, \adag_{\nu}] = \delta_{\mu\nu} \,,
\ee
where the spacelike creation operators $\adag_i$ are represented by the space coordinates of Minkowski space $\xi_i$, but the timelike creation operator $\adag_0$ is represented by the time component of the corresponding momentum vector (up to a sign), and vice-versa for the annihilation operators. 

\

\noindent Based on the above, the momenta $p_{\mu}$ can be seen as the generators of the \textit{group of translations} on Minkowski space.
A basis of infinite dimensional representations for the translation group is thus given by eigenvectors of the harmonic oscillator ladder operators. In quantum mechanics, it is well known that such a set of eigenvectors are those that mostly resemble the classical behavior of the oscillator, i.e. \textit{coherent states} \cite{Genest2013}. In particular, there exists a coherent ket for the annihilation operator and a coherent bra for the creation operator, given by:
\be
    a \,|\, \alpha \ra = \alpha \,|\, \alpha \ra
    \,\,,\quad
    \la \alpha \,|\, \adag = \la \alpha \,|\, \alpha^* 
    \,,\qquad \text{ with } \,\,\, \alpha \in \bbC \,.
    \label{CoherentStates_Eigenbasis}
\ee
The action of the creation operator on the coherent ket or that of the annihilation operator on the coherent bra can be derived by expanding the coherent state as a combination of the energy (number operator) eigenstates of the harmonic oscillator:
\be
    \,|\, \alpha \ra = e^{- \frac{1}{2} |\alpha|^2} \,
    \sum_{n=0}^{\infty} \, \frac{\alpha^n}{\sqrt{n!}} \,|\, n\ra \,.
    \label{CoherentStates_CombinationEigenstates}
\ee

The edge vectors we want to base our quantum simplicial geometry on are elements of $\M^4$ and thus, upon quantization, can be regarded as a tensor product of coherent states $|\, \alpha_{\mu}\ra = |\, \alpha_t,\alpha_x,\alpha_y,\alpha_x\ra$. This is, in turn, an eigenvector simultaneously of the three dimensional translation group generators $p_i = -i \partial_{\xi_i} = -i a_i$, \textit{and} the timelike position operator $\xi_t = a_0$.
This fact is the key element for our quantum geometric construction.
We refer to \cite{Genest2013} for more details on the coherent states of the harmonic oscillator. 
However, such representations are not unitary, as the coherent states are not orthogonal
\be
    \la \alpha \,|\, \beta \ra = e^{-\frac{1}{2}(|\alpha|^2 + |\beta|^2 - 2\alpha^* \beta)} \,.
\ee
On the other hand, since generic states of the harmonic oscillator can be decomposed as a combination of coherent states, they form an overcomplete basis and give a resolution of the identity:
\be
    \int \dd \alpha \, |\,\alpha \ra\la \alpha \,| = \pi \,.
\ee
The homogeneous polynomials in  \eqref{ExpansorDef} or  \eqref{HomogeneousPolynomials_Harmonic} are thus functions on Minkowski space or equivalently functions on the translation group.
In practice, we consider a vector $e = (\lambda_t,\lambda_x,\lambda_y,\lambda_z)\in \M^4$ in Minkowski space parametrized by the coordinates $\lambda_{\nu}$. 
The Hilbert space associated with such a vector is the space of square-integrable functions on four copies of $\bbR$, $L^2[e] := L^2[\lambda]$. In this picture, the coordinates $\lambda_{\nu} \in F(\bbR^4)$ are identified with the generators of the functions on Minkowski space, i.e. of the Hilbert space $L^2[\lambda]$. We thus refer to the \textit{vector wave function} (or alternatively to the quantization of the vector $e$), as the general function:
\be
    f(\lambda) \,\, \in L^2[\lambda] \cong F(\M^4) \,.
    \label{Vector_WaveFunction}
\ee
A basis for the four dimensional harmonic oscillator is realized as a tensor product of independent oscillators. Therefore, we define the four dimensional ladder operators as a linear combination of creation and annihilation operators such that
\be
    \begin{aligned}
        \adag(\lambda) & = a_0 \lambda_t + \adag_1 \lambda_x  + \adag_2 \lambda_y  + \adag_3 \lambda_z \,,\\ 
        a(\lambda) & = \adag_0 \lambda_t + a_1 \lambda_x  + a_2 \lambda_y  + a_3 \lambda_z \,.
    \end{aligned}
    \label{LadderOperator_Tensor}
\ee
The first operator is associated with the position on Minkowski space, identified by the vector $e$, which has the coherent bra $\la \alpha_{\nu} \,|$ as eigenvector, with eigenvalue $\alpha^*_{\nu}$; the second operator is associated with the momentum (or equivalently to the translation generator) on Minkowski space identified by the vector $e$, that has the coherent ket $|\, \alpha_{\nu} \ra$ as eigenvector, with eigenvalue $\alpha_{\nu}$. \\
The wave function \eqref{Vector_WaveFunction} of the quantum vector can thus be expanded in Fourier modes of the translation group (\lq momentum space\rq) or likewise in the harmonic decomposition of the position group (\lq configuration space\rq):
\be
    \begin{aligned}
        \textit{Translation:} \quad
        & f(\lambda) :=
        \int   \dd\alpha \dd\alpha' \, 
        \la \alpha_{\nu} \,| a(\lambda) |\, \alpha_{\nu}' \ra \, f_{\alpha_{\nu}\,,\alpha_{\nu}'} \,,\\
        \textit{Position:} \quad
        & f'(\lambda) := 
        \int \dd \alpha  
        \dd\alpha' \, 
        \la \alpha_{\nu} \,| \adag(\lambda) |\, \alpha_{\nu}' \ra \, f'_{\alpha_{\nu}\,,\alpha_{\nu}'} \,,
    \end{aligned}
    \label{VecWaveFunc_FourierDecomposition}
\ee
where the terms $\la \alpha_{\nu} \,| a(\lambda) |\, \alpha_{\nu}' \ra$ and $\la \alpha_{\nu} \,| \adag(\lambda) |\, \alpha_{\nu}' \ra$ are matrix elements that play the role of plane waves in the operation relating the Fourier transform of functions belonging to the Hilbert space $L^2[\lambda]$ to the spaces of Fourier modes on the translation (or position) group. The two decompositions are related to each other since the two plane waves $\la \alpha_{\nu} \,| a(\lambda) |\, \alpha_{\nu}' \ra$ and $\la \alpha_{\nu} \,| \adag(\lambda) |\, \alpha_{\nu}' \ra$ are the complex conjugate of each other\footnote{By making further assumptions on the wave functions, one can derive the relation between the Fourier modes $f_{\alpha_{\nu}\,,\alpha_{\nu}'} ,\, f_{\alpha_{\nu}\,,\alpha_{\nu}'}'$. For instance, requiring the wave function to be real, $\bar{f}(\lambda) = f(\lambda)$, one would obtain the condition $\bar{f}_{\alpha_{\nu}\,,\alpha_{\nu}'} = f_{\alpha_{\nu}\,,\alpha_{\nu}'}'$.}. This is, in the end, a reformulation of standard Fourier analysis on Minkowski space in group-theoretic terms, that will turn out to be useful in the following. \\
As we will exploit in section \ref{Sec_QuantumBiVector}, an explicit realization of the plane waves $\la \alpha_{\nu} \,| a(\lambda) |\, \alpha_{\nu}' \ra$ and $\la \alpha_{\nu} \,| \adag(\lambda) |\, \alpha_{\nu}' \ra$ can be provided in terms of solutions of the harmonic oscillator.

\paragraph{The four dimensional Lorentzian harmonic oscillator}
As shown in \cite{Genest2013,hyperbolic_laplacian, laplacian_hyperbolic_solution}, an infinite dimensional realization for the balanced representations $R_{0 \, \mu}$ of the Lorentz group can be obtained as the space of solutions of the Laplace equation on the hyperboloid. The Laplacian operator coincides with the Casimir $C_1$ of the Lorentz algebra, with eigenvalue $-(1+\mu^2)$. This relation is particularly interesting for us because it provides the quantization of a simple (triangle) bivector and its underlying connection to the quantum (edge) vector.\\
By a simple change of variables (see appendix (\ref{App_4dHarmonicOscillator})), the Casimir $C_1$  in \eqref{LorentzCasimirEigenvalue} for balanced representations\footnote{The only one not identically vanishing - the vanishing of the other Casimir corresponds, in the spin foam context, to the quantization of the diagonal simplicity constraints} can be expressed in terms of hyperbolic parameters as:
\be
    C_1 = J^2 - N^2 = 
    \frac{1}{\sinh^2\eta}\partial_{\eta}(\sinh^2 \eta \partial_{\eta}) +
    \frac{1}{\sinh^2\eta} \Big(
    \frac{1}{\sin\theta}\partial_{\theta}(\sin \theta \partial_{\theta}) + \frac{1}{\sin^2\theta} \partial^2_{\varphi}\Big) \,,
    \label{CasimirC1_Laplace}
\ee
which is the Laplacian on the hyperboloid $Q_1$ introduced in section \ref{SubSec_InfiniteDimRepsLorentz} for the Plancherel decomposition of the Lorentz representations. The parameters above $\eta \in \bbR \,, \theta \in [0,\pi]\,, \phi \in [0,2\pi]$ and $r$ are the hyperbolic coordinates. In section \ref{SubSec_InfiniteDimRepTranslations} we explained that the ladder operators of the harmonic oscillator can be used as generators of the translation (and position) group.
Therefore, we can use the coordinates on Minkowski space $\xi_{\mu}$ and their momenta, recast as the ladder operators of the harmonic oscillator coordinates \eqref{Map_OperatorCoordinates}, to realize the Lorentz generators (rotations and boosts)
\be
    J_a := -i \varepsilon_{abc} \, \xi_b \partial_{\xi_c} =
    -i \varepsilon_{abc} \, \adag_b a_c
    \,\,,\quad
    N_a := -i (\xi_t \partial_{\xi_a} + \xi_a \partial_{\xi_t}) =
    -i (a_0 a_a - \adag_0 \adag_a) \,.
    \label{Lorentz-HarmonicOscillator}
\ee
One can check that the above operators satisfy the usual $\so(1,3)$ commutation relations
\be
    [J_a\,,\,J_b] = i\varepsilon_{abc} \, J_c 
    \,\,,\quad
    [J_a\,,\,N_b] = i\varepsilon_{abc} \, N_c 
    \,\,,\quad
    [N_a\,,\,N_b] = -i\varepsilon_{abc} \, J_c \,.
    \label{LorentzCommutators}
\ee
Since we would like, in the following, to express the balanced representations of the Lorentz group in terms of representations of the translation group (and vice-versa), we need to derive the eigenstates of \eqref{CasimirC1_Laplace} (representations of the Lorentz group) as a combination of the eigenstates of the harmonic oscillator (representations of the translation group).
In particular, we show in this subsection that the eigenfunctions of \eqref{CasimirC1_Laplace} are the non-radial contribution of the eigenfunctions of the four dimensional harmonic oscillators.
\smallskip \\
To proceed, we recall that the target space used by Dirac \cite{Dirac} (the space of homogeneous polynomials) to construct the infinite dimensional representations of the Lorentz group, can be re-expressed as the Hilbert space of states for the four dimensional Lorentzian harmonic oscillator. 
Here the polynomials \eqref{ExpansorDef} take the form \eqref{HomogeneousPolynomials_Harmonic}. Therefore, the set of homogeneous polynomials on Minkowski space can be derived as the general solution of the Schr\"odinger equation $\cH \, \Psi = E \, \Psi$, where the Hamiltonian operator yields
\be
    \cH = - \frac{1}{2} \Delta + \frac{1}{2} (t^2 - x^2 - y^2 - z^2) \,,
\ee
is the Hamiltonian of the four dimensional Lorentzian harmonic oscillator and $\Delta$ is the four dimensional flat d'Alambertian operator $\Delta = \partial_t^2 - \partial_x^2 - \partial_y^2 - \partial_z^2$. In the appendix (\ref{App_4dHarmonicOscillator}) we show how to solve such Schr\"odinger equation in different coordinate basis, inspired by \cite{Genest2013}, and derive the eigenbasis in terms of the variables of the \textit{hyperbolic basis}:
\be
    \Psi_{n_r,\mu,\ell,m}(r,\eta,\theta,\phi) =
    (-1)^{n_r} \sqrt{\frac{n_r!}{\Gamma(n_r + \mu + 1/2)}} \,
    r^{\mu-1} e^{-\frac{1}{2}r^2} \, L^{(i\mu)}_{n_r}(r^2) \,
    \frac{1}{\sinh\eta} \, Q_{\ell}^{i\mu}(\coth\eta) \,
    Y^m_{\ell}(\theta,\phi) \,,
    \label{HyperbolicBasis2}
\ee
associated with  the energy $E = 2n_r+i\mu+1$, where $L_{n}^{(\alpha)}$ are the Laguerre polynomials, $Q^{\alpha}_{\lambda}$ are the Legendre functions of the second kind and $Y_{m}^{\ell}$ are the spherical harmonics.
Furthermore, we also show how this wave function is related to the homogeneous polynomials proposed by Dirac, expressed in terms of the harmonic functions:
\be
    \Psi_{n_r,\mu,\ell,m}(r,\eta,\theta,\phi) =
    \sum_{n_t,n_x,n_y,n_z} \, 
    \cC^{n_t,\, n_x,\, n_y,\, n_z}_{n_r,\, \mu,\, \ell,\, m} \, \Psi_{n_t,n_x,n_y,n_z}(t,x,y,z) \,,
    \label{Lorentz-Translations}
\ee
where the coefficients are given by 
 \begin{align}
    \cC^{n_t,\, n_x,\, n_y,\, n_z}_{n_r,\, \mu,\, \ell,\, m} 
         & =
  \la n_t,n_x,n_y,n_z | n_r,\mu;\ell,m \ra 
    \nonumber \\
     & =
     \sum_{n_R, n_{\rho}} \,
     \frac{i^{m + |m|} (-1)^{\tn_x+n_{\xi}} (\sigma_m i)^{n_y}}{2^{(1-\delta_{m,0})/2}} \,
    e^{i\varphi} \,\, 
     \cC^{\frac{1+|m|}{2}, \frac{1}{4} + \frac{q_z}{2}, \frac{\ell}{2} + \frac{3}{4}}_{n_{\rho}, \tn_z, n_R} \nonumber\\
     &\times
   \cC^{\frac{1}{4} + \frac{q_x}{2}, \frac{1}{4} + \frac{q_y}{2}, \frac{1+|m|}{2}}_{\tn_x, \tn_y, n_{\rho}} \,
   \cC^{1/4+q_t/2,\, \ell/2+3/4,\, (1+i\mu)/2}_{\tn_t+\on_t/2,\, n_R,\, n_r} \,,
   \label{Lorentz-Translations_Coefficients}
 \end{align}
which are a combination of the $\su(1,1)$ Clebsh-Gordan coefficients \eqref{ClebshGordan_su(1,1)}. 
As we have anticipated, the non-radial part of the solution, yielding the equation \eqref{HyperbolicBasis}, reads
\be
    \Psi_{\mu,\ell,m}(\eta,\theta,\phi) =
    \frac{1}{\sinh\eta} \, Q_{\ell}^{i\mu}(\coth\eta) \,
    Y^m_{\ell}(\theta,\phi) \,,
\ee
and it is the solution of the Laplace equation with Laplacian \eqref{CasimirC1_Laplace} and eigenvalue $-(1+\mu^2)$. Therefore, it is the infinite dimensional representation of the Lorentz group associated with a timelike bivector (normal to a spacelike triangle). These solutions can indeed be used to construct the (timelike) balanced $D$-matrices \eqref{CanonicalD-Matrices}:
\be
\label{eq:D_matrix_timelike_hyperbolik}
    D^{0 , \mu}_{\ell,m;\ell',m'}(g) = 
    \la 0,\mu;\ell,m \,| U(g) |\, 0,\mu;\ell',m' \ra = 
    \int \dd r \dd \eta \dd \Omega r^3 \sin^2(\eta) \, 
    \overline{\Psi}_{\mu,\ell,m}(\eta,\theta,\phi) \Psi_{\mu,\ell',m'}(\eta',\theta',\phi') \,,
\ee
where the primed coordinates $\{\eta',\theta',\phi'\}$ encode the action of the Lorentz transformation $g$.
We further note that, the wave function $\Psi_{n_t,n_x,n_y,n_z}(t,x,y,z)$ is an infinite dimensional realization of the harmonic oscillator eigenstate $|\, n_t,n_x,n_y,n_z \ra$. 
We recall that the coherent states of the harmonic oscillator, which are a linear combination of the eigenstates $|\, n_t,n_x,n_y,n_z \ra$, were identified as the eigenbasis of the translation group \eqref{CoherentStates_Eigenbasis}. We can thus extend the relation \eqref{CoherentStates_CombinationEigenstates} to the wave functions
\be
    \Psi_{\alpha_t,\alpha_x,\alpha_y,\alpha_z}(t,x,y,z) = 
    e^{-\frac{1}{2}\sum_{\nu}|\alpha_{\nu}|^2} \, \sum_{\{n_{\nu}\}} \,
    \frac{\alpha_t^{n_t} \alpha_x^{n_x} \alpha_y^{n_y} \alpha_z^{n_z}}{\sqrt{n_t! n_x! n_y! n_z!}} \, 
    \Psi_{n_t,n_x,n_y,n_z}(t,x,y,z) \,,
    \label{HarmonicOscillator_CoherentMinkowskian}
\ee
which, together with \eqref{Lorentz-Translations}, provides the relation between the wave functions associated with  the infinite dimensional representations of the Lorentz group and the wave functions associated with  the infinite dimensional representations of the translation group, given by
\begin{align}
    \Psi_{n_r,\mu,\ell,m}(r,\eta,\theta,\phi)& =
    \sum_{n_t,n_x,n_y,n_z} \, \int \dd^4 \alpha_{\nu} \, e^{-\frac{1}{2} \sum_{\nu} |\alpha_{\nu}|^2} 
    \cC^{n_t,\, n_x,\, n_y,\, n_z}_{n_r,\, \mu,\, \ell,\, m} \,
    \frac{\alpha_t^{* \, n_t} \alpha_x^{* \, n_x} \alpha_y^{* \, n_y} \alpha_z^{* \, n_z}}{\pi^4 \, \sqrt{n_t! n_x! n_y! n_z!}}\nonumber\\
&\times\Psi_{\alpha_t,\alpha_x,\alpha_y,\alpha_z}(t,x,y,z) \,.
    \label{HarmonicOscillator_HyperbolicCoherent}
\end{align}

These relations will be crucial for relating the new spin foam model based on edge vectors, quantized in terms of representations of the translation group, and the Barrett-Crane spin foam model based on a description of simplicial geometry using bivectors, quantized in terms of representations of the Lorentz group. 

\noindent Let us summarize the results of the above construction. 

We have shown that: a) expansors can be used as a representation for translation group; b) since edge vectors can be interpreted as translations on Minkowski space, expansors can be used as a representation for edge vectors; c) expansors can be expressed as a combination of $4d$ Lorentzian harmonic oscillators, and specifically as a combination of both creation and annihilation operators; d) hence, edge vectors can be represented analogously, i.e. in terms of the harmonic oscillator phase space variables.

\noindent We emphasize here that, while functions of individual edge vectors form a commutative algebra, their expression in terms of ladder operators of the $4d$ Lorentzian harmonic oscillator phase spaces (see \eqref{LadderOperator_Tensor}), via expansors, will play a crucial role in the later construction. It is at the root of the non-commutativity of the Lorentz algebra constructed from them \eqref{canonical_commutation_relation}, and thus of the algebra of functions of the quantum bivectors\footnote{This structure also forms the foundation of the bivector/flux representation \cite{FinocchiaroOriti:2018SFDufloMap}}, but it holds for the quantum triangle too. 
Indeed, this non-commutative structure manifests explicitly in the quantum triangle states, expressed as functions of pairs of edge vectors, thus of the tensor product of harmonic oscillators phase spaces, and features crucially in the subsequent definition of quantum amplitudes for the new model.

\noindent The precise form of non-commutativity found at the level of the algebra of functions of triangle (and bivector) data depends on the specific quantization map one chooses for the classical phase space, at the level of triangle/bivector data, or at the very initial level of edge vectors, thus harmonic oscillators. We will discuss this further below.
This choice affects then the final state sum amplitudes for quantum geometry, as we will discuss.

\section{Quantum triangle}
\label{Sec_QuantumTriangle}
We now start our new construction for quantum simplicial geometry with the analysis of a single triangle embedded in Minkowski space, first at the classical, then at the quantum level.
We recall how to combine geometric edge vector data on a triangle and their classical constraints. We then build the corresponding quantum version of the constraints, using the group-theoretic tools of section \ref{Sec_InfiniteDimReps} that define a quantum edge vector. Since both edge vectors and bivectors can be used to describe a geometric triangle, we illustrate both constructions and point out their differences at the quantum level. 
\subsection{Classical triangle}
\paragraph{Classical triangle.} The intrinsic geometry of a classical triangle in Minkowski space is completely determined (up to embedding information) by two among its three edge vectors $e_1,e_2,e_3 \in \M^4$, as illustrated in Fig.\ref{Fig:ClassicalTriangle}, with the latter set satisfying the \textit{closure condition}, namely the relation: $e_1 + e_2 +e_3=0$. Working with the larger, constrained set of variables ensures that the description is not affected by any specific choice of a pair of edge vectors. To spell out, for later use, these obvious facts, consider three edge vectors in Minkowski space given by:
\be
    e_1 = (\zeta_t,\zeta_x,\zeta_y,\zeta_z)
    \,\,,\quad
    e_2 = (\lambda_t,\lambda_x,\lambda_y,\lambda_z)
    \,\,,\quad
    e_3 = (\omega_t,\omega_x,\omega_y,\omega_z) \,,
    \label{Vectors_Triplet}
\ee
parametrized by the coordinates $\zeta_{\mu}, \lambda_{\mu}, \omega_{\mu} \in F(\M^4)$. All the geometric properties of the corresponding triangle are determined by any pair of these edge vectors, plus the closure condition:
\be
    e_1 + e_2 + e_3 =0
    \quad\Rightarrow\quad
    \zeta_{\mu} + \lambda_{\mu}+\omega_{\mu}=0 \,.
    \label{Traingle_ClosureCondition}
\ee

\paragraph{Classical bivector.}
A classical bivector, associated with the same triangle in Minkowski space, is obtained by taking the wedge product of two edge vectors. A bivector contains less information than the one required to specify the geometry of the associated triangle. In fact, if we consider again three edge vectors $e_1,e_2,e_3 \in \M^4$, the bivector geometry is determined by the two conditions:
\begin{enumerate}
    \item \textit{Closure relation}: every edge vector of the three vectors of the triangle is given by the sum of the other two $e_1 + e_2 +e_3=0$;
    \item \textit{Skew-symmetry}: the normal to the plane spanned by the constrained three vectors, in which the triangle lies, is given by the restriction to the wedge (or external) product of any two edge vectors (up to a change of orientation).
\end{enumerate}
The first condition alone would reduce the information contained in the three (edge) vectors to that characterizing a classical triangle, while the second restricts it further (again, in a way that remains independent of any specific choice of two edge vectors). Given the three edge vectors \eqref{Vectors_Triplet} (and the closure condition), the bivector represented in Fig.\ref{Fig:ClassicalTriangle} is thus constructed as:
\be
    b := e_1 \wedge e_2 = e_1 \wedge e_3 = e_3 \wedge e_2 \,.
    \label{Classicalbivector}
\ee
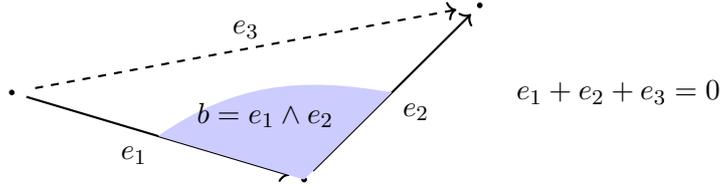
\begin{figure}
    \centering
    \begin{tikzpicture}[scale=1 , rotate around x=0]
    \coordinate (v1) at (0,0,3);
    \coordinate (v2) at (0,0,-3);
    \coordinate (v3) at (-5,0,0);
    
    \coordinate (c) at ($ 1/3*(v1) + 1/3*(v2) + 1/3*(v3) $);
    
    \draw[fill]
    (v1) circle [radius=0.03]
    (v2) circle [radius=0.03]
    (v3) circle [radius=0.03];
    
    \draw[-> , thick]
    ($ 0.95*(v3) + 0.05*(v1) $) -- node[below left] {$e_1$} ($ 0.05*(v3) + 0.95*(v1) $); 
    \draw[-> , thick] 
    ($ 0.95*(v1) + 0.05*(v2) $) -- node[below right] {$e_2$} ($ 0.05*(v1) + 0.95*(v2) $); 
    \draw[-> , thick , dashed] 
    ($ 0.95*(v3) + 0.05*(v2) $) -- node[above] {$e_3$} ($ 0.05*(v3) + 0.95*(v2) $); 
    
    \path[nearly transparent , fill=darklavander] 
    (v1) -- ($ 0.5*(v3) + 0.5*(v1) $) to [out=35 , in=170] ($ 0.5*(v2) + 0.5*(v1) $) -- (v1);
    
    \node[below] 
    at (c) {$b = e_1 \wedge e_2$};
    \node[right] 
    at (1.5,0,0) {$e_1+e_2+e_3=0$};
\end{tikzpicture}
    \caption{Triangle with edge vectors $e_1,e_2,e_3 \in \M^4$ and closure relation \eqref{Traingle_ClosureCondition}. In blue is the bivector part.}
    \label{Fig:ClassicalTriangle}
\end{figure}
\subsection{Quantum triangle}
\label{Sec_QuantumBiVector}
We now proceed to the quantization of a triangle geometry and its bivector, following the suggestions in \cite{CraneYetter}, and relying on the relations we obtained in section \ref{SubSec_InfiniteDimRepTranslations}.  

In the construction, it is essential to keep in mind the relation between the $4d$ harmonic oscillator and the representations of both the Lorentz and translation groups. Also, we see the emergence of the aforementioned non-commutative structure in quantum states—expressed as functions of edge vectors, and thus the role of the choice of quantization map. Specifically, the algebra of a quantum triangle, constructed from the tensor product of algebras associated with a pair of edge vectors (identified as the algebra of a $4d$ harmonic oscillator), is given a universal enveloping algebra structure. The quantization map for this defines then the star product and the non-commutative plane waves for the corresponding algebra of functions. In the following, we keep the discussion on the star product as general as possible, indicating where a choice has to be made, and providing an example of such choice.
\medskip

\paragraph{Quantum triangle from quantum edge vectors.} We now proceed with deriving the wave function of a quantum triangle using a pair of quantum edge vectors. 
\medskip

Recall that, we identified in \eqref{Vector_WaveFunction} the quantum states of an edge vector in Minkowski space with the (square-integrable) functions of the translation group. This was then realized as the Hilbert space of the four dimensional harmonic oscillator. Let us then consider a triangle $t$ with edge vectors that satisfy the closure condition \eqref{Traingle_ClosureCondition}. 
To start with the most symmetric realization, the space of states associated with the quantum triangle is simply the tensor product of the Hilbert spaces associated with its three edge vectors, subject to the closure condition:
\be
    L^2[e_1,e_2, e_3] := L^2[e_1] \ot L^2[e_2] \ot L^2[e_3]
    \,,\quad e_1+e_2+e_3=0 \,.
    \label{Triangle_HilbertSpace}
\ee
The closure condition can also be seen as imposing invariance of states under the diagonal action of the translation group on the three arguments of the wave function seen as points in $\M^4$, thus identifying the space of states as spanned by invariant tensors in the three representation spaces of the translation group. This is conceptually correct, but the non-compactness of the translation group makes a definition of such invariant tensors not trivial. 

\noindent Obviously, the closure condition makes one of the three edge vectors immaterial, fixing it as a function of the other two, so we have a reduced, but equivalent quantization in terms of two quantum edge vectors only. In other words, the Hilbert space of a quantum triangle is in fact the tensor product of the Hilbert spaces associated with  any two of its edge vectors, for instance; $e_1,e_2 \subset t$, subject to the closure condition:
\be
    L^2[e_1,e_2] := L^2[e_1] \ot L^2[e_2] 
    \,,\quad e_1+e_2+e_3=0 \,.
    \label{Triangle_HilbertSpace_2}
\ee

To ensure proper geometric interpretation, our construction should not be affected by the choice of the pair of vectors $e_1,e_2$ (this is a leftover of the closure condition in the reduced description). 
To ensure this, we introduce the operator that maps the quantum state space of a triangle in terms of the edge vectors $e_1,e_2$ into the quantum state space of the exact same triangle in terms of the edge vectors $e_1,e_3$.
Let such \textit{switching operator} be defined as the map:
\be
    \sigma : L^2[e_1,e_2] \to L^2[e_1,e_3] \,.
    \label{SwitchingOperator}
\ee
Thus, the proper quantum state space of a triangle is the space $L^2[e_1,e_2]$ equipped with the closure condition and invariant under the action of the switching operator $\sigma$.

\paragraph{Quantum triangle from the quantum bivector.}
As we have seen, one can restrict the classical configuration space of a triangle to its anti-symmetric part, to obtain the geometry of the associated bivector. At the quantum level, this procedure is realized by taking the skew-symmetric part of the tensor product \eqref{Triangle_HilbertSpace}, which is invariant under the $\sigma$ operator. This subspace, denoted by $T^A[e_1,e_2]$, is therefore naturally identified as the Hilbert space of a bivector.
Moreover, as discussed in section \ref{Sec_InfiniteDimReps}, by constructing bivectors (at both classical and quantum levels) directly from wedging edge vectors, we restrict our attention only to the simple bivectors\footnote{Geometrically, this simplicity condition ensures that the two edge vectors and the normal vector are not planar, and is imposed  purely as a condition on bivectors:
\be
    \langle b \,,\, \ast b \rangle = 0 \,. 
\ee These are indeed the well-known simplicity constraints on which the usual spin foam construction in terms of bivectors is based \cite{BaratinOriti:2011GFTBarrettCrane, BarrettCrane:1999BarrettCrane1}.}.
At the quantum level, this can be equivalently translated into associating the bivectors with duals of Lie algebra elements $b = \ast L \in \so(3,1)^*$ and quantizing them by replacing the (dual) Lie algebra with a sum over their representation category. The simplicity constraint amounts to restricting to the balanced representations $R_{0 \, \rho}$ or $R_{j \, 0}$ of the principle series, whose Plancherel decomposition is given in the set of equations \eqref{PlancharelDecomposition}; the choice of one or the other of the two classes of balanced representations corresponds to considering spacelike or timelike triangles, respectively (see \cite{Barrett1999,Jercher:2022mky}).
We can then establish that, once we associate an element of the dual Lorentz Lie algebra $\so^*(1,3)$ to each wedge product of a pair of translations on Minkowski space $\M^4 \wedge \M^4$, we identify an isomorphism between the Hilbert space of wedged vectors (a simple bivector) with the space of functions on such Lie algebra. This isomorphism reads:
\be
    T^A[e_1,e_2] := L^2[e_1] \wedge L^2[e_2] \cong F(\so^*(1,3)) \,.
\ee

Using the Hilbert space of a quantum bivector, we can construct explicitly the operator corresponding to it, along the lines suggested in \cite{CraneYetter}, and provide different representations for the same Hilbert space. This is relevant for deriving the Fourier expansion of bivector wave functions. \\
Let us recall that in section \ref{SubSec_InfiniteDimRepTranslations} we identified the set of representations of the translation group on Minkowski space as the space of solutions of the $4d$ Lorentzian harmonic oscillator. 
In this formulation,   \eqref{Map_OperatorCoordinates}, expressed as a combination of the ladder operators, provides an explicit infinite dimensional realization of the isomorphism $\M^4 \wedge \M^4 \cong \so^*(1,3)$. The elements of the dual Lorentz algebra (bivectors) are then given by the wedge product of creation and annihilation operators (translations). One can extend such a scheme to a general simple bivector. To this end, consider a pair of edge vectors:
\be
    e_1 = (\zeta_t,\zeta_1,\zeta_2,\zeta_3)
    \,\,,\quad
    e_2 = (\lambda_t,\lambda_1,\lambda_2,\lambda_3) \,.
    \label{Vectors}
\ee
One can associate to such edges the position and momentum (or translation) operators; these are realized as functional operators on the Hilbert space $L^2[\zeta]$ and $L^2[\lambda]$ of the edge vectors. Similarly to \eqref{LadderOperator_Tensor} when represented in the harmonic basis, elements living on these Hilbert spaces are expressed as a combination of the creation and annihilation operators of the harmonic oscillator as in \eqref{eq:edgevectors_creation_annihilationOP}.\\
According to the map \eqref{Map_OperatorCoordinates}, the annihilation operators $a$ are then associated with the generators of translations on Minkowski space, while the creation operators $\adag$ can be seen as the quantization of their dual momenta (position operators $\xi$ on Minkowski space).
In a similar manner, the wedge product of the two edge vectors $e_1 \wedge e_2$ can be assigned an operator that acts on the Hilbert space $L^2[\zeta,\lambda]$ that reads
\be
    b_{e_1 \wedge e_2} := 
    -i \adag_1 \wedge a_2 = 
    -i (\adag_1 a_2  - \adag_2a_1) \,.
    \label{QuantumBiVector}
\ee
The operator $b_{e_1 \wedge e_2}$ is understood as the quantization of the simple bivector $e_1 \wedge e_2$.
Note that for $\zeta_{\mu} = \delta_{\mu,0}$ and $\lambda_{\mu} = \delta_{\mu,i}$ we have $b_{e_1 \wedge e_2} = N_i$, while for $\zeta_{\mu} = \delta_{\mu,i}$ and $\lambda_{\mu} = \delta_{\mu,j}$ we instead obtain $b_{e_1 \wedge e_2} = \varepsilon_{ijk} \, L_k$.
Moreover, we can alternatively express the bivector operator \eqref{QuantumBiVector} in the $\so(1,3)$ basis, namely in terms of rotations and boosts as:
\be
    b_{e_1 \wedge e_2} = \sum_a \, \big(\alpha_a N_a + i\beta_a L_a\big) \,,
    \label{BiVector_Lorentz}
\ee
where
\be
    \alpha_a = (\zeta_t \lambda_a - \zeta_a \lambda_t)
    \,\,,\quad
    \beta_a = -i\varepsilon_{abc} \, \zeta_a \lambda_b \,.
    \label{BiVectorCoordinates_Lorentz-Translation}
\ee
\smallskip \\
According to \cite{CraneYetter}, the Hilbert space $L^2[\zeta,\lambda]$ associated with the quantum bivector (\ref{QuantumBiVector}) has to be invariant under the switching operator. We prove this fact for the bivector operator (\ref{BiVector_Lorentz}).

\begin{proposition}[Invariance under the switching operator]
    Consider a triplet of vectors $e_1,e_2,e_3$ on Minkowski space parametrized by the coordinates $\zeta,\lambda,\omega$, such that they form the boundary of a triangle: $\zeta + \lambda + \omega=0$.
    The Hilbert space of the bivector \eqref{QuantumBiVector} does not depend on which pair of edge vectors are used to construct the bivector \eqref{QuantumBiVector}. Therefore, it is invariant under the switching operator
    \be
        \sigma : L^2[\zeta,\lambda] \,\,\to\,\, L^2[\zeta,\omega]
        \qquad \text{ for } \,\,\,
        \zeta + \lambda + \omega=0\,,
    \ee
    up to a sign that reflects the orientation of the vector normal to the triangle.
\end{proposition}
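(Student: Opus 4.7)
The plan is to compute $b_{e_1 \wedge e_3}$ directly from the definition \eqref{QuantumBiVector} using the closure $\omega = -(\zeta + \lambda)$ and to show that it equals $-\,b_{e_1 \wedge e_2}$, then to promote this operator identity to a Hilbert-space equivalence. First I would write $b_{e_1 \wedge e_3} = -i\bigl(\adag(\zeta)\, a(\omega) - \adag(\omega)\, a(\zeta)\bigr)$ and expand using the linearity of the contractions \eqref{LadderOperator_Tensor} in their vector argument: $\adag(\zeta)\, a(\omega) = -\adag(\zeta)\,a(\zeta) - \adag(\zeta)\,a(\lambda)$ and $\adag(\omega)\, a(\zeta) = -\adag(\zeta)\,a(\zeta) - \adag(\lambda)\,a(\zeta)$. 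The diagonal $\adag(\zeta)\,a(\zeta)$ terms cancel in the subtraction without invoking any commutator, leaving $b_{e_1 \wedge e_3} = -i\bigl(\adag(\lambda)\,a(\zeta) - \adag(\zeta)\,a(\lambda)\bigr) = -\,b_{e_1 \wedge e_2}$. The same manipulation applied to $b_{e_3 \wedge e_2}$ yields the analogous identity, consistent with the classical relation mentioned around \eqref{Classicalbivector}.

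Equivalently one can work in the Lorentz-algebra decomposition \eqref{BiVector_Lorentz}--\eqref{BiVectorCoordinates_Lorentz-Translation}: substituting $\omega_\mu = -\zeta_\mu - \lambda_\mu$ yields $\alpha_a' = \zeta_t \omega_a - \zeta_a \omega_t = -\alpha_a$, where the self-terms $\zeta_t\zeta_a - \zeta_a\zeta_t$ drop out, and $\beta_a' = -i\varepsilon_{abc}\zeta_a \omega_b = -\beta_a$, where the antisymmetry $\varepsilon_{abc}\zeta_a\zeta_b = 0$ kills the self-term. Both derivations give the same overall sign and make transparent that it reflects the reversal of the oriented plane spanned by $(e_1,e_3)$ relative to $(e_1,e_2)$, i.e.\ the flip of the normal to the triangle.

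To conclude the Hilbert-space statement, I would define $\sigma: L^2[\zeta,\lambda] \to L^2[\zeta,\omega]$ as the unitary reparametrization induced on the four-dimensional Lorentzian harmonic oscillator of the second edge by $\lambda \mapsto \omega = -(\zeta+\lambda)$. By the operator identity above, $\sigma$ intertwines the two quantum bivectors up to this sign, $\sigma\, b_{e_1 \wedge e_2}\, \sigma^{-1} = -\,b_{e_1 \wedge e_3}$. Because the only nontrivial isomorphism of the principal series is $R_{j\,\mu} \cong R_{-j\,-\mu}$, the sign flip acts trivially on the Plancherel decomposition of the bivector spectrum into the balanced representations $R_{0\,\mu}$ and $R_{j\,0}$ given in \eqref{PlancharelDecomposition}. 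Consequently $L^2[\zeta,\lambda]$ and $L^2[\zeta,\omega]$ are canonically unitarily equivalent as representation spaces, up to the orientation sign of the triangle's normal, which is exactly what the proposition asserts.

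The main step is the short algebraic identity; the only subtlety I anticipate is keeping the sign conventions for the oriented bivector unambiguous and verifying that no anomalous commutator contribution arises when expanding the ladder contractions. The latter is immediate since the diagonal terms cancel before any non-trivial use of \eqref{canonical_commutation_relation} is required, so there is no genuine obstacle beyond bookkeeping.
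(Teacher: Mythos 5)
Your proof is correct and follows essentially the same route as the paper's: expand $b_{e_1\wedge e_3}$ using the closure condition and the linearity of $\adag(\cdot)$, $a(\cdot)$ in their vector argument, observe that the diagonal $\adag(\zeta)a(\zeta)$ terms cancel, and recover $\pm\, b_{e_1\wedge e_2}$. The only differences are cosmetic: you keep the sign $\omega=-(\zeta+\lambda)$ explicit (the paper absorbs it into the orientation convention, obtaining equality rather than a minus sign), and you add a consistency check in the Euler-angle parametrization \eqref{BiVectorCoordinates_Lorentz-Translation}, neither of which changes the argument.
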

\begin{proof}
     We show that the quantum bivector in \eqref{BiVector_Lorentz} is invariant under such switching operator and thus agrees with \cite{CraneYetter}. 
    According to the closure condition $\zeta + \lambda + \omega=0$, the third edge of the triangle with coordinates $\omega$ can be written as (up to a minus sign specifying the orientation)
    \be
        e_3 = (\omega_t\,,\omega_1\,,\omega_2\,,\omega_3) =
        (\zeta_t + \lambda_t,\zeta_1+\lambda_1,\zeta_2+\lambda_2,\zeta_3+\lambda_3) \,,
    \ee
    and a basis for the associated position and momentum operators is given by
    \be
        \begin{aligned}
            &
            \adag_3 = a_t \omega_t + \adag_x \omega_x + \adag_y \omega_y + \adag_z \omega_z \,,\\
            &
            a_3 = \adag_t \omega_t + a_x \omega_x + a_y \omega_y + a_z \omega_z \,,
        \end{aligned}
        \qquad\to \quad
        \adag_3 = \adag_1 + \adag_2 
        \,\,,\quad
        a_3 = a_1 + a_2 \,.
    \ee
    A direct computation shows that the quantization of the bivector $e_1 \wedge e_3$ is equivalent to that of $e_1 \wedge e_2$, since they obey the condition $e_1 + e_2 +e_3=0$. Thus, the associated Hilbert spaces are isomorphic (invariant under the switching operator).
    \be
        b_{e_1 \wedge e_3} :=
        -i (\adag_1 a_3 - \adag_3 a_1) =
        -i \big(\adag_1 (a_1 + a_2) - (\adag_1 + \adag_2) a_1\big) =
        -i (\adag_1 a_2 - \adag_2 a_1) :=
        b_{e_1 \wedge e_2} \,.
    \ee
\end{proof}

\noindent

Notice that such a realization provides the quantization of a triangle on Minkowski space with edges parametrized by the coordinates $\zeta,\lambda,\omega$. It also confirms that, once we restrict the Hilbert space of the quantum triangle to the skew-symmetric part, we establish the proper quantization of a bivector, which is given as the wedge product of any two of the triangle edge vectors.

\paragraph{Wave functions of quantum triangles and bivectors.}
Following the above general construction of Hilbert spaces and operators for a quantum triangle and a quantum bivector, we can now derive explicit expressions and some properties of their quantum states.
Let us then consider a pair of vectors $e_1,e_2$, parametrized respectively by the coordinates $\zeta,\lambda \in F(\M^4)$.
We refer to the \textit{triangle wave function} (or the quantization of the triangle $t$), as the function:
\be
    f(\zeta,\lambda) \,\,\in L^2[\zeta,\lambda] \,.
    \label{Triangle_WaveFunction}
\ee
  To ensure the appropriate encoding of the geometric data of a triangle, we recall that the two coordinates of the edge vectors must satisfy the closure relation $\zeta + \lambda +\omega=0$ for a given edge $e_3$ parametrized by $\omega$, such that the resulting wave function from the quantization procedure \eqref{Triangle_WaveFunction} is invariant under the switching operator \eqref{SwitchingOperator}. 

  These functions form a non-commutative algebra. Indeed, the proper quantization is realized upon specifying the quantization map that defines the star product between such functions of triangles, and that can be imposed already at the level of constituting ladder operators. An example is illustrated below \eqref{eq:quantizationmap_linear_ordering}.

\medskip
\textit{Example: normal ordering.} The algebra of a triangle, let it be $\mathrm{A}$, is given by the tensor product of the algebras of a pair of edge vectors and that the algebra of a single $4d$ edge vector can in turn be identified as the algebra of the $4d$ harmonic oscillator, as in \eqref{LadderOperator_Tensor}. Specifically, a basis for the algebra $\mathrm{A}$ of a triangle, endowed with the Hilbert space as in \eqref{Triangle_HilbertSpace}, is given by the set of tensor products between generators of harmonic oscillators, i.e. ${\adag_1 \adag_2 ,\, \adag_1 a_2 ,\, a_1 \adag_2 ,\, a_1 a_2}$ satisfying the closure constraint 
\be
    \cC : \qquad a_1 + a_2 + a_3 = 0 \,,\quad \adag_1 + \adag_2 + \adag_3 = 0 \,,
\ee
where we used the notation
\be
\label{eq:notation_creationannihilation_edgeVec}
    a_1 := a(\zeta) \,,\, \adag_1 := \adag(\zeta) 
    \,\,,\quad
    a_2 := a(\lambda) \,,\, \adag_1 := \adag(\lambda) \,,
\ee
and more precisely 

\be
\label{eq:edgevectors_creation_annihilationOP}
    \begin{aligned}
        & 
        \adag_1 := 
        a_t \zeta_t + \adag_x \zeta_x  + \adag_y \zeta_y  + \adag_z \zeta_z \,,\\
        & 
        a_1 := 
        \adag_t \zeta_t + a_x \zeta_x  + a_y \zeta_y  + a_z \zeta_z \,,
    \end{aligned}
    \qquad
    \begin{aligned}
        & 
        \adag_2 := 
        a_t \lambda_t + \adag_x \lambda_x  + \adag_y \lambda_y  + \adag_z \lambda_z \,,\\
        & 
        a_2 := 
        \adag_t \lambda_t + a_x \lambda_x  + a_y \lambda_y  + a_z \lambda_z \,.
    \end{aligned}
\ee
 Following the standard construction, a basis for the universal enveloping algebra $\cU(\mathrm{A})$ is given by the monomials on the product of the generators $\{\adag_1 \adag_2 ,\, \adag_1 a_2 ,\, a_1 \adag_2 ,\, a_1 a_2\}$. A choice of quantum representation for the universal enveloping algebra is then given by a choice of ordering for (products of) such generators. Hence, let $t(e_1,e_2)$ be a classical function on a triangle with the edge vectors $e_i$ being associated with the harmonic oscillator operators $a_i, \adag_i$ upon quantization, and satisfying the closure condition $e_1 + e_2 + e_3 = 0$. The Hilbert space $L^2[e_1,e_2]$ can thus be identified as the set of functions on the universal enveloping algebra, in turn expressed as linear combinations of the generators (ladder operators upon quantization).
The quantization map $\cQ(t)$ corresponding to normal ordering takes any such function ( with the set of coefficients $t_{h,l,m,n}$) into the operator:
\be
\label{eq:quantizationmap_linear_ordering}
    \cQ(t) \equiv \cQ(t(e_1,e_2)) = \sum_{h,l,m,n} \, 
    t_{h,l,m,n} \, : \, (\adag_1 \adag_2)^h \, (\adag_1 a_2)^l \, (a_1 \adag_2)^m \, (a_1 a_2)^n \, : \,.
\ee
where we used the notation $: \, :$ for the normal ordering. We also recall that these functions are subject to the closure constraint $e_1 + e_2 + e_3 = 0$, therefore the quantization map $\cQ$ will have to fulfill the condition $\cQ(t(e_1,e_2)) = \cQ(t(e_1,e_3))$. This request can be recast as a condition on the coordinate functions $t_{h,l,m,n}$, by substituting the $a_2 = - a_3 - a_1$ and $\adag_2 = - \adag_3 - \adag_1$, in the previous formula and imposing that the resulting combination of coefficients equals the original ones, thus having 
\be
    \cQ(t(e_1,e_3)) = \sum_{h,l,m,n} \, 
    t_{h,l,m,n} \, : \, (\adag_1 \adag_3)^h \, (\adag_1 a_3)^l \, (a_1 \adag_3)^m \, (a_1 a_3)^n \, : \,.
\ee

\noindent The quantization of a product of two classical triangle functions $t_1$, $t_2$ with the same quantization map is obtained by expanding those functions in as in \eqref{eq:notation_creationannihilation_edgeVec}, and then combining the two bases (given by two different monomials on the universal enveloping algebra) with the normal ordering:
\begin{align}
\label{eq:starproduct_triangles}
    \cQ(t_1 t_2) &
    =
    \sum_{h_1,l_1,m_1,n_1} \sum_{h_2,l_2,m_2,n_2} \, 
    t^1_{h_1,l_1,m_1,n_1} t^2_{h_2,l_2,m_2,n_2} \, 
    \\
    & : \, (\adag_1 \adag_2)^{h_1} \, (\adag_1 a_2)^{l_1} \, (a_1 \adag_2)^{m_1} \, (a_1 a_2)^{n_1} \,
    (\adag_1 \adag_2)^{h_2} \, (\adag_1 a_2)^{l_2} \, (a_1 \adag_2)^{m_2} \, (a_1 a_2)^{n_2} : \,.
\end{align}
which differs, in general, from the operator product of the operators obtained by separately quantizing the two classical functions, i.e. $\cQ(t_1) \cQ(t_2)$.

\noindent This difference, which affects how the classical algebraic structure of the space of triangle functions is preserved (or not) at the quantum level, is one key aspect of different choices of quantization maps. It is one of the reasons, for example, why the Duflo map is a preferred choice for quantizing classical functions over a Lie algebra, since it preserves the algebraic relations among invariant functions (Casimirs) of the Lie algebra. We will show a possible role for the Duflo map in this context below.

\noindent Once the quantization map $\cQ(t)$ is chosen (whatever it is), one can derive the star product on the quantum triangle, i.e. the non-commutative product of the Hilbert space $L^2[e_1,e_2]$ represented as a non-commutative space of triangle functions\footnote{In order to avoid introducing further symbols, we use the same ones we used for classical functions also for the wavefunctions. The context should avoid confusion.}. For generic wavefunctions $t_1$ and $t_2$, such product is {\it defined} as:
\be
    t_1 \star t_2 =
    \cQ\mone (\, \cQ(t_1) \, \cQ (t_2) \,) \,.
\ee
In the end, the star product of the Hilbert space of a triangle represented in terms of functions of the edge vectors is obtained from the combination of ordered products between ladder operators of the harmonic oscillators corresponding to the same edge vectors, and is determined by the chosen ordering.

\noindent Clearly, the explicit formula for the star product between generic functions can in principle be derived from this definition, but it involves rather tedious (and not particularly illuminating) calculations, even for rather simple functions.  

\noindent Other examples of quantization maps for the edge vectors (i.e. the 4d Lorentzian harmonic oscillators) can be chosen, and result in different non-commutative structures and star products at the level of functions of triangle data. 
A similar construction applies to quantum bivectors. 
We refer to the \textit{bivector wave function} (or the quantization of a bivector $b$), as the function:
\be
    g(\zeta,\lambda) \,\,\in T^A[\zeta,\lambda] = F(\M^4 \wedge \M^4) \,,
    \label{BiVector_WaveFunction}
\ee
that is the sub-class of the functions\footnote{We denote here the function of the bivector with $g(\zeta,\lambda)$. This should not be confused with the standard notation for a group element $g$ which will be clarified whenever it appears.} given in \eqref{Triangle_WaveFunction} corresponding to elements of the skew-symmetric part of the Hilbert space $L^2[\zeta,\lambda]$. 

\noindent By the same token, another formulation of the Hilbert space of the bivector can be characterized in terms of the (Lorentzian) Euler angles \eqref{BiVector_Lorentz}, rather than coordinates on the two vectors. The Hilbert space in this case takes the form $L^2[b] := L^2[\alpha,\beta] = F(\so^*(1,3))$, where the generators (the coordinates functions) are now the Euler angles $\{\alpha_a,\beta_a\} \in F(\so^*(1,3)) \cong F(\bbR^6_{\star})$ (here, the subscript $\star$ stands for a non-trivial (star) product for the functions on $\bbR^6$, given by the non-trivial product of the algebra $\so(1,3)$).
Obviously, the Hilbert space is not affected by the choice of coordinates: the two formulations are isomorphic, as $\M^4 \wedge \M^4 \cong \so^*(1,3)$ and related by the map \eqref{BiVectorCoordinates_Lorentz-Translation}. 

\smallskip

This is also a non-commutative space of functions. For defining the relevant star product one can proceed analogously to what we have seen for triangle wavefunctions. However, one could also take advantage of the equivalent formulation in terms of Lorentz algebra,  i.e. the isomorphism $\M^4 \wedge \M^4 \cong \so^*(1,3)$, to choose a quantization map motivated by this algebraic structure, and derive from it the associated star product among wavefunctions of geometric bivectors. The resulting star product would implicitly correspond to a certain quantization map at the level of products of edge vectors (thus, of triangles)\footnote{Obviously, a star product for (functions of) bivectors would not uniquely determine a star product for more general combinations (functions of) edge vectors.}. See \cite{FinocchiaroOriti:2018SFDufloMap, OritiRosati:2018GFTNonCommFourier} for the construction at this Lie algebra level for the 4-rotation and Lorentz algebras, and \cite{Guedes:2013vi} for $SU(2)$.
\noindent

\noindent A convenient choice of quantization map for (semi-simple) Lie algebras is the Duflo map. To use it to induce a star product for functions of bivectors constructed out of edge vectors (thus not generic bivectors\footnote{The bivectors constructed out of edge vectors correspond to the ones obtained after imposition of the geometricity (including simplicity) constraints on generic bivectors.}), let us call $\cL$ the operator that maps the classical functions on the algebra of triangles (specifically, the skew-symmetric elements, i.e. geometric bivectors) in the classical functions on the Lorentz algebra:
\be
    \cL : \mathrm{M}^4 \wedge \mathrm{M}^4 \,\to\, \so^*(1,3) \,.
\ee
In more concrete terms, the action of the operator $\cL$ is
\be
    \cL[t(e_1,e_2)] = t(b) \,, \quad \text{where} \quad b = - i e_1 \wedge e_2 \,.
\ee
Using such an operator together with the Duflo map $\cD$, the quantization map of a geometric bivector stemming from the Duflo map is simply given by the combination:
\be
    \cQ \equiv \cD \circ \cL \,.
\ee
Here we first use the operator $\cL$ to express the function of a geometric bivector as a function of the Lorentz algebra, then we quantize the function on the Lorentz algebra using the Duflo map $\cD$. The inverse map $\cD\mone\cL\mone$ applied to generic operators in the algebra of quantum bivectors would give us then a non-commutative function of bivectors constructed from quantum edge vectors, i.e. a function on the Hilbert space $L^2[b] \cong L^2[e_1 \wedge e_2]$. \\
The corresponding star product is induced using a combination of the above maps. Let $f(e_1,e_2)$ and $g(e_1',e_2')$ be two functions on triangles satisfying $e_1 + e_2 + e_3 = 0$ and $e_1' + e_2' + e_3' = 0$. Let also $b = b_{e_1 \wedge e_2}$ and $b' = b_{e_1' \wedge e_2'}$ be the associated bivectors. The star product between the two functions can be defined as:
\be
    f \star g =
    \cL\mone \cD\mone(\cD(\cL(f)) \cD(\cL(g))) \,.
\ee

\smallskip
\noindent The identification of the bivectors as non-commutative geometric entities,  allows us to make contact with the non-commutative flux representation of quantum states and dynamical amplitudes in loop quantum gravity, spin foam models, and group field theories. However, as emphasized in section \ref{Sec_InfiniteDimReps}, this is obtained starting from commutative functions of commutative edge vectors (which as we have seen are elements of the translation group), which are taken as primary or more fundamental variables. 
\smallskip 
\noindent We can use the operator associated with the bivector constructed in \eqref{QuantumBiVector} to expand the wave function in Fourier modes. This expansion can be realized both in terms of the edge vectors or in the Euler angle parametrization:
\be
    \begin{aligned}
        \textit{Vectors:} \quad
        & f(\lambda_1,\lambda_2) := -i
        \int \dd \alpha \dd\alpha' \,
        \la \alpha_{\nu} \,| \adag(\lambda_1) a(\lambda_2) - \adag(\lambda_2) a(\lambda_1) |\, \alpha_{\nu}' \ra \,  f_{\alpha_{\nu}\,,\alpha_{\nu}'} \,,\\
        \textit{Euler:} \quad
        & f(\alpha,\beta) := 
        \sum_{n_r=0}^{\infty} \int \dd \mu \mu^2 \, 
        \sum_{\ell,\ell'} \sum_{m,m'} \,
        \la n_r, \mu ; \ell, m \,| \sum_a \, \big(\alpha_a \, N_a + i\beta_a \, L_a \big) |\, n_r, \mu ; \ell', m ' \ra \, 
        f^{n_r, \mu ; \ell, m}_{n_r, \mu ; \ell', m'} \,,
    \end{aligned}
    \label{BiVecWaveFunc_FourierDecomposition}
\ee

with $\ell,\ell' \in [0,\infty)$, $\ell \leq m \leq \ell$ and $\ell' \leq m' \leq \ell'$. These are general expressions for the decomposition of triangle wavefunctions; however, for timelike bivectors thus spacelike triangles only special group elements are used (those restricted to the hyperboloid) and thus only some of the matrix elements are non-vanishing; specifically, one has only balanced representations with $n=0$ contributing.  The terms $\la \alpha_{\nu} \,| \adag(\lambda_1) a(\lambda_2) - \adag(\lambda_2) a(\lambda_1) |\, \alpha_{\nu}' \ra$ and $\la n_r, \mu ; \ell, m \,| \sum_a \, \big(\alpha_a \, N_a + \beta_a \, L_a \big) |\, n_r, \mu ; \ell', m ' \ra$ 
appearing in \eqref{BiVecWaveFunc_FourierDecomposition} play the role of non-commutative plane waves in the Fourier transform. 
An explicit expression for the bivector wave function \eqref{BiVector_WaveFunction} is obtained by the infinite dimensional representations of such plane waves. Introducing the compact notations $b(\lambda_1,\lambda_2) = -i(\adag(\lambda_1) a(\lambda_2) - \adag(\lambda_2) a(\lambda_1))$ and $b(\alpha,\beta) = \sum_a \, \big(\alpha_a \, N_a + i\beta_a \, L_a \big)$, we end up with the following expression for the bivector operator:
\be
    \begin{aligned}
        \la \alpha_{\nu} \,| b(\lambda_1,\lambda_2) |\, \alpha_{\nu}' \ra & =
        \int \dx^4 \, \overline{\Psi}_{\alpha_{\nu}}(t,x,y,z) \, \Psi_{\alpha_{\nu}'}(t',x',y',z')
        \,,\\
        \la n_r, \mu ; \ell, m \,| b(\alpha,\beta) |\, n_r, \mu ; \ell', m ' \ra & =
        \int \dd r \dd \eta \dd \Omega \, r^3 \sinh^2(\eta) \,
        \overline{\Psi}_{n_r, \mu ; \ell, m}(r,\eta,\theta,\phi) \, \Psi_{n_r, \mu ; \ell', m'}(r',\eta',\theta',\phi') \,,
    \end{aligned}
    \label{BiVector_Decomposition_PlaneWave}
\ee
where the primed coordinates encode the action of the operator $b$, in the basis \eqref{QuantumBiVector} or \eqref{BiVector_Lorentz} in terms of \eqref{HarmonicOscillator_CoherentMinkowskian} and \eqref{HarmonicOscillator_HyperbolicCoherent}. 
In section \ref{sec:Plancherel} we will relate the above expansions to the standard Plancherel decomposition for functions on the Lorentz group, in order to provide an explicit relation between the amplitudes of the new model based on edge vectors to those of usual spin foam models based on bivectors and Lorentz group elements.
We further emphasize that the plane waves appearing in the above Fourier decomposition can simultaneously be recast in terms of the eigenfunctions of the harmonic oscillator expressed in the Minkowski basis \eqref{MinkowskiBasis} (as well as in the hyperbolic basis \eqref{HyperbolicBasis}), and likewise in terms of the eigenfunctions of the coherent states. This can be made evident using the relations \eqref{HarmonicOscillator_CoherentMinkowskian}, \eqref{HarmonicOscillator_HyperbolicCoherent} or their inverses. \\
For what will come later, it is important to stress that the bivector wave functions, as elements of the algebra $\bbR^6 \cong \so^*(1,3)$, are equipped with a sum:
\be
    f(b+b') = \,\,
    \left\{\,\,
    \begin{aligned}
        & f(\lambda_1+\lambda_1',\lambda_2+\lambda_2') \,,\\
        & f(\alpha_1+\alpha_1',\beta_2+\beta_2') \,.
    \end{aligned}
    \right.
\ee
It is also interesting to recall that the Lie algebra $\so^*(1,3)$ is endowed with a nontrivial Poisson structure; indeed, as a quantum group, it is isomorphic to $\bbR^6_{\star}$, which is endowed with a $\star$ product. 
In the Fourier decomposition, this product appears as a non-trivial combination of plane waves \eqref{BiVector_Decomposition_PlaneWave}.

\section{Quantum tetrahedron}
\label{Sec_QuantumTetrahedron}

Now we are equipped with the essential material to provide a quantum description of a tetrahedron, both in terms of edge vectors and of bivectors. This is the building block of quantum geometry in spin foam models and group field theories, as well as in the simplicial sector of canonical LQG (where it corresponds to open 4-valent spin network vertices). It is also the starting point for the quantum geometric construction of spin foam amplitudes. In the following we first recall the classical description; then, we provide its quantum counterpart and the definition of quantum states.  

\subsection{Classical tetrahedron} 
\paragraph{Classical tetrahedron via edge vectors.} Let us consider six edge vectors $e_a \in \M^4$, for $a=1,\dots,6$, that satisfy the four closure conditions:
\be
    e_1 + e_2 + e_3 = 0
    \,\,,\quad
    e_4 + e_5 + e_3=0
    \,\,,\quad
    e_1 + e_5 +e_6=0
    \,\,,\quad
    e_2 + e_4 + e_6 = 0
    \,.
    \label{FourTriangles}
\ee
They specify the full geometry of the tetrahedron.
The four closure conditions are not independent of each other: any one of them can be written as a linear combination of the other three relations. 
Thus, the geometry of a tetrahedron is actually completely determined by three of the six edge vectors, attached to a common vertex, such that, taken in pairs, they enter three of the four closure conditions \eqref{FourTriangles}  \cite{Baez1999, Barbieri_1998SN_Simpli}. 
For instance, using Fig.(\ref{Fig:Tetrahedron}), the geometry of a tetrahedron $\tau_e$ in Minkowski space is encoded in the vector triplet $e_1,e_3,e_5 \subset \tau_e$ with the set of three conditions (as in Fig.(\ref{Fig:QuantumTetrahedron})):
\be
\label{threeconditions}
    e_1 + e_2 + e_3 = 0
    \,\,,\quad
    e_4 + e_5 + e_3=0
    \,\,,\quad
    e_1 + e_5 + e_6 =0\,,
\ee 
\begin{figure}
    \centering
    \begin{subfigure}[b]{0.475\textwidth}
        \begin{tikzpicture}
    \coordinate (x) at (0.1,0);
    \coordinate (y) at (0,0.1);
    
    \coordinate (w1) at (12,0);
    \coordinate (w2) at (9,0);
    \coordinate (w3) at (10.5,2.5);
    \coordinate (w4) at (13.5,2.5);
    \coordinate (w5) at (7.5,2.5);
    \coordinate (w6) at (10.5,-2.5);

    \node at ($ 1/3*(w1) + 1/3*(w2) + 1/3*(w3) $) {$t_1$};
    \node at ($ 1/3*(w1) + 1/3*(w3) + 1/3*(w4) $) {$t_2$};
    \node at ($ 1/3*(w1) + 1/3*(w2) + 1/3*(w6) $) {$t_3$};
    \node at ($ 1/3*(w2) + 1/3*(w3) + 1/3*(w5) $) {$t_4$};
    
    \draw[fill] (w1) circle  [radius=0.03];
    \draw[fill] (w2) circle  [radius=0.03];
    \draw[fill] (w3) circle  [radius=0.03];
    \draw[fill] (w4) circle  [radius=0.03];
    \draw[fill] (w5) circle  [radius=0.03];
    \draw[fill] (w6) circle  [radius=0.03];
    
    \draw[-> , thick] ($ (w1) - (x) $) -- node[above, scale=0.9] {$e_{1}$} ($ (w2) + (x) $);
    \draw[-> , thick] ($ (w2) + (x) + (y) $) -- node[left, scale=0.9] {$e_{2}$} ($ (w3) - (x) - (y) $);
    \draw[-> , thick] ($ (w3) + (x) - (y) $) -- node[right, scale=0.9] {$e_{3}$} ($ (w1) - (x) + (y) $);
    \draw[-> , thick] ($ (w3) + (x) $) -- node[above, scale=0.9] {$e_{4}$} ($ (w4) - (x) $);
    \draw[-> , thick] ($ (w4) - (x) - (y) $) -- node[right, scale=0.9] {$e_{5}$} ($ (w1) + (x) + (y) $);
    \draw[<- , thick] ($ (w1) - (x) - (y) $) -- node[right, scale=0.9] {$e_{5}$} ($ (w6) + (x) + (y) $);
    \draw[-> , thick] ($ (w6) - (x) + (y) $) -- node[left, scale=0.9] {$e_{6}$} ($ (w2) + (x) - (y) $);
    \draw[<- , thick] ($ (w2) - (x) + (y) $) -- node[left, scale=0.9] {$e_{6}$} ($ (w5) + (x) - (y) $);
    \draw[<- , thick] ($ (w5) + (x) $) -- node[above, scale=0.9] {$e_{4}$} ($ (w3) - (x) $);
    
    \draw[<-> , dotted , thick] ($ (w5) - 7.5*(y) $) to [out=-100 , in=160] ($ (w6) - 10*(x) + 7.5*(y) $);
    \draw[<-> , dotted , thick] ($ (w4) - 7.5*(y) $) to [out=-80 , in=20] ($ (w6) + 10*(x) + 7.5*(y) $);
    \draw[<-> , dotted , thick] ($ (w5) + 7.5*(x) + 4.5*(y) $) to [out=35 , in=145] ($ (w4) - 7.5*(x) + 4.5*(y) $);
\end{tikzpicture}
        \caption{Combination of four triangles into a tetrahedron. The four triangles are labeled by $t_i$ and the six edges
(shared pairwise by triangles) by $e_j$.}
        \label{Fig:Tetrahedron}
    \end{subfigure}%
    \begin{subfigure}[b]{0.475\textwidth}
        \centering
        \begin{tikzpicture}[scale=0.7 , rotate around y=5]
    \coordinate (v1) at (0,0,4);
    \coordinate (v2) at (7,0,0);
    \coordinate (v3) at (0,0,-4);
    \coordinate (v4) at ($ 1/3*(v1) + 1/3*(v2) + 1/3*(v3) + (0,6.5,0) $);
    
    \coordinate (c) at ($ 1/4*(v1) + 1/4*(v2) + 1/4*(v3) + 1/4*(v4)$);
    
    \draw[fill] 
    (v1) circle [radius=0.03]
    (v2) circle [radius=0.03]
    (v3) circle [radius=0.03]
    (v4) circle [radius=0.03];
    
    \draw[-> , thick , dashed] 
    ($ 0.95*(v2) + 0.05*(v3) $) -- node[above] {$e_2$} ($ 0.95*(v3) + 0.05*(v2) $);
    \draw[-> , thick , dashed] 
    ($ 0.95*(v3) + 0.05*(v4) $) -- node[below right] {$e_4$} ($ 0.95*(v4) + 0.05*(v3) $);
    \draw[-> , thick , dashed] 
    ($ 0.95*(v4) + 0.05*(v2) $) -- node[right] {$e_6$} ($ 0.95*(v2) + 0.05*(v4) $);
    
    \draw[-> , thick]
    ($ 0.95*(v1) + 0.05*(v2) $) -- node[below right] {$e_1$} ($ 0.95*(v2) + 0.05*(v1) $);
    \draw[<- , thick]
    ($ 0.95*(v1) + 0.05*(v3) $) -- node[right] {$e_3$} ($ 0.95*(v3) + 0.05*(v1) $);
    \draw[<- , thick]
    ($ 0.95*(v1) + 0.05*(v4) $) -- node[above left] {$e_5$} ($ 0.95*(v4) + 0.05*(v1) $);
    
    \node[right] at (7.5,3,0) {$
    \begin{aligned}
        & e_1 + e_2 + e_3 = 0 \\
        & e_1 + e_5 + e_6=0 \\
        & e_4 + e_5 + e_3 =0
    \end{aligned}$};
\end{tikzpicture}
        \caption{Tetrahedron from three edge vectors, $e_1,e_3,e_5$.}
        \label{Fig:QuantumTetrahedron}
    \end{subfigure}
    \caption{Combinatorics of $\tau$ and its closure constraints from edge vectors.}
\end{figure}
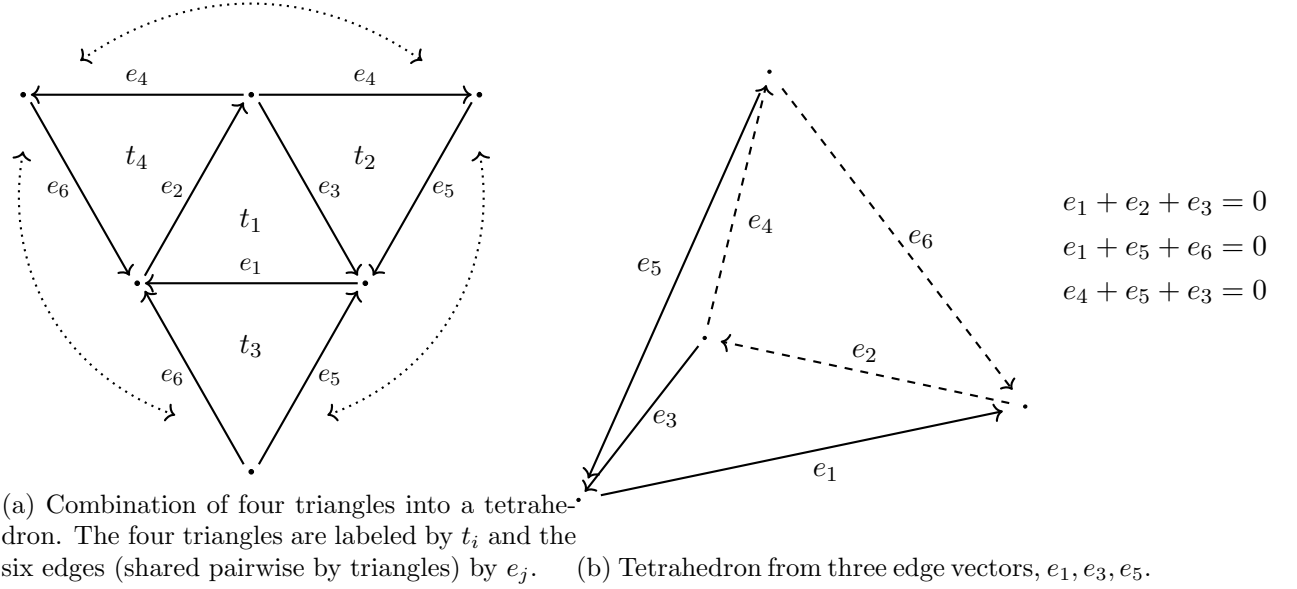
Note that the \textit{normal vector} to the tetrahedron in its $4d$ embedding can be obtained easily in this edge-based formulation. Indeed, we can express it as the Hodge dual of the wedge product of any triplet of independent edge vectors, and this yields:
\be
    n_{\tau_e} = \ast(e_i \wedge e_j \wedge e_k)
    \,\,,\quad \text{ for } \quad
    e_i + e_j + e_k \neq 0 \,.
    \label{TetrahedronNormal_Vectors}
\ee
The existence of the normal $n_{\tau_e}$ is ensured by the fact that, in three dimensions, there exists always a triplet of independent edge vectors, thus the wedge product \eqref{TetrahedronNormal_Vectors} does not vanish.

\paragraph{Classical tetrahedron: bivectors.}
Alternatively to the above description based on the edge vectors, one can characterize a classical tetrahedron $\tau_b$ in Minkowski space in terms of the four bivectors $b$. This description is obtained from the skew-symmetric geometry of the four triangles specified by the closure conditions in \eqref{FourTriangles}. In order to properly encode the geometric data in the tetrahedron $\tau_b$, the four bivectors have to satisfy the following two constraints:
\begin{enumerate}
    \item \textit{Dependence relation}: the wedge product of each pair of bivectors $b_i,b_j \subset \{b_1,b_2,b_3,b_4\}$ vanishes: $b_i \wedge b_j = 0$;
    \item \textit{Closure relation}: each of the four bivectors is given by the sum of the other three: $b_1 + b_2 + b_3 + b_4 = 0$. 
\end{enumerate}
The first condition ensures that each bivector shares one and only one vector with each of the others (thus it is indeed a simple bivector constructed from the edge vectors of a geometric tetrahedron), while the second property is a direct consequence of the fact that the tetrahedron geometry is encoded in the closure of the four triangles \eqref{FourTriangles}.
\\
We point out the crucial difference between the geometric construction exploiting the full data of the triangle, and that based solely on bivectors. One can observe that the bivector-based tetrahedron $\tau_b$ encodes less geometric information with respect to the edge-based one $\tau_e$. Specifically, one cannot reconstruct the normal vector from the bivectors alone, the knowledge of which on the other hand is relevant for understanding how the extrinsic geometric data transform under $4d$ Lorentz transformations in the embedding Minkowski space.
This becomes evident when we start with the edge vectors formulation and recover the condition on the bivectors. Let us see this explicitly. Given the four triangles specified by the closure conditions \eqref{FourTriangles}, one can define the four bivectors as:
\be
    b_1 = e_2 \wedge e_3 \,\,,\quad
    b_2 = e_3 \wedge e_5 \,\,,\quad
    b_3 = e_6 \wedge e_5 \,\,,\quad
    b_4 = e_2 \wedge e_6 \,\,
    \label{BiVectors_Tetrahedron}
\ee
The first relation (the dependence relation) for the construction of $\tau_b$ is easily established from edge vectors, once we rely on the fact that at most a triplet of the six edge vectors $e_a$ are linearly independent in four dimensions, namely:
\be
    b_i \wedge b_j =
    (e_a \wedge e_b) \wedge (e_b \wedge e_c) =
    e_a \wedge (e_b \wedge e_b) \wedge e_c = 0 \,,
\ee
for $a \neq b$, $a \neq c$ and $b \neq c$.
As for the closure relation, the closure of the four triangles \eqref{FourTriangles} implies the gluing of the boundary of the tetrahedron. This is translated into the fact that the four bivectors \eqref{BiVectors_Tetrahedron} sum up to zero:
\begin{align}
    b_1 + b_2 + b_3 + b_4 & =
    e_2 \wedge e_3 + e_3 \wedge e_5 + e_6 \wedge e_5 + e_2 \wedge e_6
    \nonumber \\
    & =
    (e_2 - e_5) \wedge e_3 + (e_2 - e_5) \wedge e_6 =
    (e_2 - e_5) \wedge (e_3 + e_6)
    \nonumber \\
    & =
    (e_2 - e_5) \wedge (e_2 - e_5) = 0 \,.
    \label{ClosureBiVectors}
\end{align}
We thus recover the bivector picture of the tetrahedron $\tau_b$. This is obtained by imposing further restrictions on the edge vectors in order to access the skew-symmetric sector of the geometry described by the bivectors. However, the converse clearly does not work. 
There is no notion of the normal vector \eqref{TetrahedronNormal_Vectors} in terms of such entities. In turn, this is needed to ensure $4d$ covariance of the bivector description (including the constraints ensuring that they come from edge vectors) and for controlling the full four-dimensional geometry (thus, the gluing of 4-simplices). A similar point was raised in \cite{BaratinOriti:2011GFTBarrettCrane, Jercher:2022mky}, where an extended formulation of the Barrett-Crane model was realized to include explicitly the normal vector.

\subsection{Quantum tetrahedron}
We begin by outlining the general procedure for defining the Hilbert space of a tetrahedron using quantum edge vectors and bivectors.  
Furthermore, we define the fundamental properties and characteristics of the tetrahedron wave function.
\paragraph{Quantum tetrahedron from edge vectors.}
 As we discussed, the geometry of a tetrahedron is completely determined by three of its edge vectors meeting at one of the four vertices, where the closure conditions \eqref{threeconditions} must hold. Identifying the quantum space of a single edge vector with the space of (square-integrable) functions on the translations group, we naturally define the quantum space of a tetrahedron $\tau_e$ in Minkowski space as the tensor product of three of them, equipped with the proper closure conditions:
\be
    L^2[e_1,e_3,e_5] := L^2[e_1] \ot L^2[e_3] \ot L^2[e_5] 
    \,\,,\quad
    \left\{\,\,
    \begin{aligned}
        & e_1+e_2+e_3=0 \,,\\
        & e_4+e_5+e_3=0 \,,\\
        & e_1+e_5+e_6=0 \,.
    \end{aligned}
    \right.
    \label{QuantumTetrahedron_Hilbert}
\ee
Here the tensor product of each pair of spaces $L^2[e_i,e_j]$ for $i\neq j=1,3,5$ is the quantum space of a triangle (see section \ref{Sec_QuantumTriangle}), and it is automatically invariant under the switching operator \eqref{SwitchingOperator}. 
As was the case for the individual triangles, one can also obtain the same Hilbert space by starting from the larger symmetric Hilbert space of six edge vectors and \lq \lq project down\rq\rq by imposing the closure constraints. The latter enforces indeed invariance under a diagonal action of the translation group. However, as noted earlier, the non-compact nature of the translation group makes this definition viable but rather formal, since the constraints imposition does not correspond to well-defined projection operators and the resulting amplitude would require regularization. The situation is not much different, though, from the one concerning the imposition of gauge invariance under the Lorentz group, and corresponding closure condition, in usual spin foam amplitudes, where the resulting divergences can easily be regularized. Similarly, and subject to the same cautionary remarks, one can obtain the Hilbert space of the quantum tetrahedron by tensoring the Hilbert spaces of its four quantum triangles, each defined in terms of edge vectors, and then projecting down again with respect to the relevant closure conditions.

\medskip

\paragraph{Quantum tetrahedron from bivectors} One can reconstruct $\tau_b$ in terms of the three bivectors by reducing the degrees of freedom of each triangle $t$ to its skew-symmetric part. At the level of Hilbert spaces, this translates into 
\be
    T^A[e_1,e_2,e_3] := T^A[e_1,e_3] \ot T^A[e_1,e_5] \ot T^A[e_3,e_5] \cong F(\so^*(1,3))^{\times 3} \,.
    \label{QuantumTetrahedron_Hilbert_SkewSymm}
\ee
However, as we already observed at the classical level,  some geometric information, namely the data of the normal vector to  $\tau_b$, is not encoded in the skew-symmetric part of the Hilbert space given in \eqref{QuantumTetrahedron_Hilbert_SkewSymm}. It is instead included in the sub-space of \eqref{QuantumTetrahedron_Hilbert} obtained by the triple anti-symmetric product:
\be
    L^2[n_\tau] = L^2[e_1] \wedge L^2[e_3] \wedge L^2[e_5] 
    \subset L^2[e_1,e_3,e_5] 
    \,\,,\quad
    L^2[n_\tau] \nsubseteq T^A[e_1,e_2,e_3] \,,
\ee

i.e. it can be reconstructed from quantum edge vector data.

\paragraph{Tetrahedron wave function.}
We can now present the construction of the wave functions associated with the tetrahedron $\tau_e$ (as well as $\tau_b$)\footnote{Provided we specified beforehand the quantization map as the example mentioned in section \ref{Sec_QuantumTriangle}, we can write down the wave function of the tetrahedron and products thereof, which will enter the new spin foam amplitude.}. 

These functions are the elements of the Hilbert space \eqref{QuantumTetrahedron_Hilbert}, and therefore are defined as functions on three copies of the translation group on Minkowski space:
\be
    f(\lambda_a,\lambda_b,\lambda_c) \,\,\in F(\M^4)^{\times 3} \,,
    \label{TetrahedronWaveFunction_Edges}
\ee
where the coordinates of three edges of the tetrahedron meeting at one of its four vertices satisfy the closure conditions for three different triangles, e.g. $a=1$, $b=3$ and $c=5$ in the notation of Fig.(\ref{Fig:Tetrahedron}) and Fig.(\ref{Fig:QuantumTetrahedron}). 
\\
As mentioned, a more symmetric way to construct these functions is to extend the Hilbert space \eqref{QuantumTetrahedron_Hilbert} and consider as a first step a set of six edges $e_i$, for $i=1,\dots,6$. This set of vectors should close to form four triangles according to the constraints \eqref{FourTriangles}.
Therefore, the elements of this extended Hilbert space would be functions of the form $f(\lambda_1,\lambda_2,\lambda_3,\lambda_4,\lambda_5,\lambda_6) \in F(\M^4)^{\times 6}$ subject to the quantum closure constraint:
\be
    \hat{\cC}_t(\lambda_1,\, \dots,\, \lambda_6) =
    \delta_\star(\lambda_1 + \lambda_2 + \lambda_3) \, 
    \delta_\star(-\lambda_3 + \lambda_4 + \lambda_5) \,
    \delta_\star(-\lambda_5 + \lambda_6 - \lambda_1) \,,
    \label{ClosureConstraint}
\ee
where the subscript $t$ stands for the closure of three of the four triangles $t$ of $\tau_e$ and this way, the closure of the fourth triangle is automatically obtained.

Note that the delta function is the non-commutative delta function, acting as  the standard delta distribution inside integration when star-multiplied to any function, with the star product following from the choice of the quantization map, obtained earlier at the level of quantization of triangles. 

\smallskip 
\noindent As in the classical prescription, by reducing the Hilbert space \eqref{QuantumTetrahedron_Hilbert} to its skew-symmetric part \eqref{QuantumTetrahedron_Hilbert_SkewSymm}, we obtain the standard formulation of a quantum tetrahedron in terms of bivectors, which are however automatically simple, i.e. functions of the (quantum) edge vectors.  
The same remarks about the underlying non-commutative structure at the level of bivectors (thus of a flux representation, as developed for canonical loop quantum gravity,  spin foam models and group field theories \cite{BaratinOritimodel,BaratinDittrichOritiTambornino:2010LQG,BaratinOriti:2011GFTBarrettCrane,BaratinOriti:2010GFTNonCommutativeMetric}) apply here. \\ 
One can consider a set of three bivectors, writing down the wave function for $\tau_b$ as an element of the quantum space \eqref{QuantumTetrahedron_Hilbert_SkewSymm}:
\be
    f(x_1,x_2,x_3) \,\,\in F(\so^*(1,3))^{\times 3} \,,
    \label{TetrahedronWaveFunction_BiVectors}
\ee
where the three bivectors coordinates obey the closure condition $x_1 + x_2 + x_3 + x_4 = 0$ for some $x_4 \in F(\so^*(1,3))$.
The coordinates (the variables $x$) of the bivectors can be identified, for instance using \eqref{BiVector_Lorentz}, as the Euler angles of the Lorentz group:
\be
    x_i := \{\alpha^a_i,\,\beta^a_i\} \,\, \in F(\bbR^6_{\star}) \cong F(\so^*(1,3)) \,.
\ee
Once again, the same expression for the tetrahedron wave function can be achieved by extending the Hilbert space and considering a constrained sub-space implementing the skew-symmetry condition. In this scenario, we work on four copies of the space of functions in the dual Lorentz algebra, whose elements are the functions $f(x_1,x_2,x_3,x_4) \in F(\so^*(1,3))^{\times 4}$ constrained by the condition $(\hat{\cC}_{\tau_b} \star f) $. This imposes the following closure constraint on the bivector coordinates\footnote{Recall that the subscript $\tau_b$ indicates that this constraint enforces the closure of the boundary of the tetrahedron $\tau_b$.}:
\be
    \hat{\cC}_{\tau_b}(x_1,x_2,x_3,x_4) = \delta_\star(x_1+x_2+x_3+x_4) \,.
    \label{ClosureConstraint_Triangles}
\ee
 As we already indicated in the discussion of the classical tetrahedron, the constraint on the bivector  \eqref{ClosureConstraint_Triangles} is inherited by the closure of the triangles in \eqref{ClosureConstraint}. Similarly, we can identify the sub-space of $F(\so^*(1,3))^{\times 4}$, constrained by the condition in  \eqref{ClosureConstraint_Triangles}, as a sub-space of the extended tetrahedron Hilbert space  $L^2[e_1,e_2,e_3,e_4,e_5,e_6]$ obeying the condition \eqref{ClosureConstraint}, i.e. of what we defined as the tetrahedron Hilbert space:
\be
    (\hat{\cC}_t \star f)(\lambda_1,\, \dots,\, \lambda_6) 
    \,\,\,\Rightarrow\,\,\,
    (\hat{\cC}_{\tau_b} \star f)(x_1,\, x_2,\, x_3,\, x_4) \,.
    \label{Tetrahedron_WaveFunction_BiVectors}
\ee
As already pointed out, the above relation is not an equivalence, since the set of edge vectors contains {\it more} information about the geometry of the tetrahedron and its embedding in four dimensions than the set of simple bivectors.\\ This fact may require some further discussion. The quantum states of tetrahedra depending on the full set of (independent) edge vectors contain in fact more information than that corresponding to the intrinsic quantum geometry of a tetrahedron as the possibility of reconstructing the $4d$ normal, thus the embedding information in $4d$ Minkowski space, indicates. This extra information is needed, we argue, to ensure the proper gluing of 4-simplices across shared tetrahedra to define proper four-dimensional quantum geometries, and it may also exclude degenerate bivector geometries (this was the original motivation for the proposal in \cite{CraneYetter}). At the level of quantum states, this redundancy from the point of view of intrinsic geometric data is reflected in the fact that our quantum states functions of edge vectors form an overcomplete basis of the Hilbert space of a quantum triangle, akin to coherent states. It is also reflected in the fact that they define non-commutative representations of the Hilbert space, rather than one in terms of proper eigenstates of (a maximally commuting set of) geometric operators. This will play an explicit role in the definition of the amplitudes for the whole simplicial complex and for relating the same to the usual spin foam amplitudes in terms of irreps of the Lorentz group.\\
One can expand the wave function for the tetrahedron $\tau_b$  \eqref{TetrahedronWaveFunction_BiVectors} in the Fourier decomposition as we did for the bivector wave function in \eqref{BiVecWaveFunc_FourierDecomposition}. Let us introduce the compact notation $T_{\alpha_{\nu}}^{\alpha_{\nu}'}(\lambda_1,\lambda_2)$ for the translation group plane waves $\la \alpha_{\nu} \,| \adag(\lambda_1) a(\lambda_2) - \adag(\lambda_2) a(\lambda_1) |\, \alpha_{\nu}' \ra$. The decomposition of \eqref{BiVecWaveFunc_FourierDecomposition} then reads:
\be
    f(\lambda_1,\lambda_2) = -i \int \dd \alpha \dd \alpha' \, 
    T_{\alpha_{\nu}}^{\alpha_{\nu}'}(\lambda_1,\lambda_2) \, f^{\alpha_{\nu}}_{\alpha_{\nu}'} \,,
\ee
Hence, the tetrahedron $\tau_e$ wave function \eqref{TetrahedronWaveFunction_BiVectors} decomposes in the Fourier modes as:
\begin{align}
    f(\lambda_1,\, \dots,\, \lambda_6) &=
    \int \prod^6\dd \alpha\prod^6\dd {\alpha'}^6 \,
    T_{\alpha_{1;\nu}}^{\alpha_{1;\nu}'}(\lambda_2,\lambda_3) \,
    T_{\alpha_{2;\nu}}^{\alpha_{2;\nu}'}(\lambda_3,\lambda_5) \\ \nonumber
    & T_{\alpha_{3;\nu}}^{\alpha_{3;\nu}'}(\lambda_6,\lambda_5) \,
    T_{\alpha_{4;\nu}}^{\alpha_{4;\nu}'}(\lambda_2,\lambda_6) \,  f^{\alpha_{1;\nu},\,\alpha_{2;\nu},\,\alpha_{3;\nu},\,\alpha_{4;\nu}}_{\alpha_{1;\nu}'\,\alpha_{2;\nu}'\,\alpha_{3;\nu}'\,\alpha_{4;\nu}'} \,.
    \label{TetraWaveFunct_FourierDecomposition}
\end{align}
where the different labels appearing in $T_{\alpha_{\nu}}^{\alpha_{\nu}'}(\lambda_i,\lambda_j)$ are those that decorate the quantum vectors obeying a certain combinatorial relation (this is dictated by the constraint \eqref{ClosureConstraint}). This is of course encoded in the combinations of the ladder operators and can be exactly extracted by evaluating the matrix elements appearing in \eqref{BiVector_Decomposition_PlaneWave} (relying as well on the material in the appendix \eqref{App_4dHarmonicOscillator}). In section \ref{Sec_SF-EdgeVectors} we will see how this is realized for the example of a timelike bivector. In section \ref{sec:Plancherel}
we will also relate these expressions with those involving Lorentz group representations in a Plancherel decomposition, as the key step for a more traditional spin foam expression of the quantum amplitudes of the new model.

\section{Quantum amplitudes for simplicial 4-geometries}
\label{Sec_SF-EdgeVectors}
Now we can finally provide the explicit description of the (spin foam) amplitudes for four dimensional Lorentzian (simplicial) geometries based on quantum edge vectors. To this end, we work with the quantum space of the tetrahedron $\tau_e$ derived in section \ref{Sec_QuantumTetrahedron} expressed in terms of the quantum edge vectors, as the basic building block.  
Starting from the explicit expression of the spin foam amplitudes, we then show how they can be also expressed as a combination of the Barrett-Crane amplitudes. This is done only in the case of a spacelike tetrahedron, although the more general construction can be obtained straightforwardly using the tools developed in section \ref{Sec_InfiniteDimReps}. We finally give the definition of the full spin foam model (including a sum over simplicial complexes encoding the continuum limit) in the GFT formalism in section \ref{SubSec_GFTEdgeVectors} (the field theory is defined over several copies of the translation group with the proper closure conditions imposed on the triangles and tetrahedra). 
Note that, in many of the following formulae we have chosen to indicate explicitly only a subset of star products as such, while leaving some products of non-commutative functions indicated with a standard product symbol, in order to avoid cluttering the notation. The nature of the functions should make clear what is the nature of their products, when they appear.  
\subsection{The quantum amplitude}
\label{subsectio:chapter_newspinfoam}
The quantum amplitude for a general four-dimensional triangulation $\Gamma$ is defined as the integral over all allowed configurations of its edge vectors, with 4-simplex contributions encoding the conditions for an allowed edge vector geometry of each of them, and tetrahedral contributions encoding the needed identification of edge vectors across 4-simplices sharing them:
\be
    \cA_{\Gamma} = \int [\dd \lambda] \,\prod_{\tau} \, \cA_{\tau}[\{\lambda\}_{\tau},\{\lambda'\}_\tau]\,  \prod_{s} \, \cA_{\s}[\{\lambda\}_s]\,.
    \label{NewSFAMplitude}
\ee

The amplitude of a single 4-simplex is:
\begin{align}
    \cA_{\s}[\{\lambda\}_s] 
  &  =  
    \prod_{\alpha=1}^{3} \prod_{a=1}^{4-\alpha} \hat{\cC}_{t_{\alpha;a}}\left(\{\lambda_{\alpha;i}\})\right)\\
    &\star
    \big(\delta_\star(\lambda_{1;1}-\lambda_{2;4} )\star \delta_\star(\lambda_{1;1}-\lambda_{4;5})\big) \star\big(\delta_\star(\lambda_{1;2}-\lambda_{2;6}) \star \delta_\star(\lambda_{1;2}-\lambda_{5;3})\big)\nonumber\\
&\star\big(\delta_\star(\lambda_{1;3}-\lambda_{2;2}) \star \delta_\star(\lambda_{1;3}-\lambda_{3;6})\big) \star\big(\delta_\star(\lambda_{1;4}-\lambda_{3;5}) \star \delta_\star(\lambda_{1;4}-\lambda_{5;1})\big) \nonumber\\ 
&\star\big(\delta_\star(\lambda_{1;5}-\lambda_{3;1}) \star \delta_\star(\lambda_{1;5}-\lambda_{4;4})\big) \star\big(\delta_\star(\lambda_{1;6}-\lambda_{4;3} )\star \delta_\star(\lambda_{1;6}-\lambda_{5;2})\big) \nonumber\\ 
&\star \big(\delta_\star(\lambda_{2;1}-\lambda_{3;4}) \star \delta(\lambda_{2;1}-\lambda_{5;5})\big)\star \big(\delta_\star(\lambda_{2;3}-\lambda_{3;2} )\star \delta_\star(\lambda_{2;3}-(\lambda_{4;6})\big)\nonumber\\
&\star\big(\delta_\star(\lambda_{2;5}-\lambda_{4;1}) \star \delta_\star(\lambda_{2;5}-\lambda_{5;4})\big) \star\big(\delta_\star(\lambda_{3;3}-\lambda_{4;2}) \star \delta_\star(\lambda_{3;3}-\lambda_{5;6})\big)\,,    \label{4Simplex_Amplitude}
\end{align}
where $\lambda_{\alpha;i}$ is the  the $i^{th}$ displacement vector  associated with the edge $i$ of the tetrahedron $\alpha$, for $\alpha=1,\dots,5$ and $i=1,\dots,6$, and $t_{\alpha;a}$ stands for the triangle $a$ of the tetrahedron $\alpha$, for $a=1,2,3,4$. Furthermore, the $\star$ symbol refers to the non-commutative product of the functions on $\M^4$, once an appropriate quantization map is specified. 
It is important to stress that the above expression is written as a linear concatenation of (non-commutative) functions only for easiness of notation. A non-trivial ordering for the convolution operations, which is {\it not} a linear concatenation, has to be chosen and affects the result. Our choice of ordering is expressed graphically in Fig.(\ref{Fig:4-simplex}). 
The above simplex amplitude is simply a combination of the closure constraints for its five tetrahedra, in turn, given by closure constraints, encoded in non-commutative delta functions, for their triangle faces, and by the identification (gluing) conditions for the edge vectors associated with the same edge in two different tetrahedra.

\smallskip
\noindent Notice that, in defining the amplitude, we have not assigned different weights to different allowed configurations of edge vectors, and simply weighted equally all the allowed ones. This is the standard procedure in state sums constructions for quantum geometry (and true also for the most studied models of $4d$ quantum gravity, e.g. the Barrett-Crane and EPRL models, as well as for topological BF models). Any non-trivial action indicating the underlying discrete quantum dynamics appears only after re-expressing the amplitudes in terms of both relevant dynamical $1st$-order variables, i.e. bivector and discrete connection. This remains a choice, though, and not a logical necessity. The mathematical and physical consequences of different weight choices should be investigated.\\  
Note also that in the expression above we have decorations on all ten edges of the 4-simplex. Indeed, we constructed the 4-simplex amplitude \eqref{4Simplex_Amplitude} starting by the extended Hilbert spaces of the five tetrahedra, each of them being given as a constrained version of the functions on its six edge vectors. As we pointed out in section \ref{Sec_QuantumTetrahedron}, it is enough to impose \textit{three} triangle-closure conditions for each tetrahedron. Hence, in this formulation, the full geometry of the 4-simplex is recovered by providing the closure of only six of the ten triangles in it. We refer again to Fig.(\ref{Fig:4-simplex}) for the edges and triangles combinatorics of a 4-simplex. According to this figure, the expression of the amplitude appearing in \eqref{4Simplex_Amplitude} is achieved once we require the closure of the set of triangles $a=1,2,3$ of the tetrahedron $\alpha=1$, of the triangles $a=1,2$ of the tetrahedron $\alpha=2$ and the last triangle $a=1$ of tetrahedron $\alpha=3$. This is just an example of how one can fix the combinatorics. The price of a more symmetric definition, involving the closure conditions of all five tetrahedra and all their four triangles, which can of course be given in the same language, would be the presence of divergences due to the use of redundant delta functions.\\
\begin{figure}
    \centering
    \input{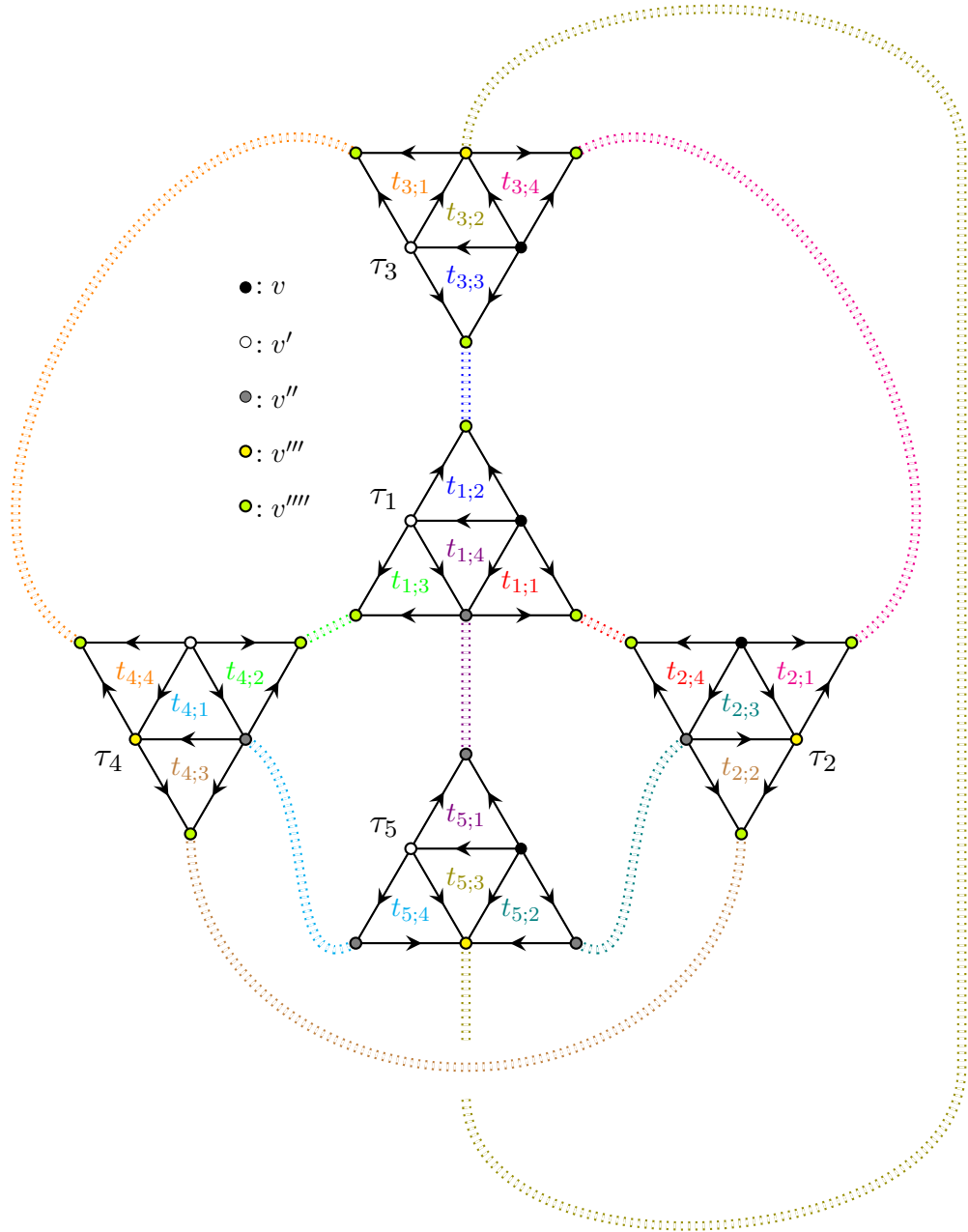}
    \caption{4-simplex boundary construction: five tetrahedra share five vertices. Each of the four faces of each tetrahedron is identified with one of the faces of the other four tetrahedra. We use the same color and a double dotted line for the identified faces.}
    \label{Fig:4-simplex}
\end{figure}
\smallskip \\
The 4-simplex amplitudes are then combined together, to give the expression of the amplitude for the whole simplicial complex, by identifying all the edge decorations across shared tetrahedra. To realize the gluing we use the amplitude $\cA_{\tau}$ given by;
\be
    \cA_{\tau} = \, \prod_{i=1}^6 \, \delta_\star(\lambda_{\alpha;i}-\lambda_{\beta;i})\,,    \label{TetrahedronAmplitude}
\ee
\smallskip \\
i.e. again a product of non-commutative delta functions imposing matching of geometric data between the neighboring tetrahedra $\alpha$ and $\beta$.\\
We had seen that the information about edge vectors, understood as elements of the translation group, can also be used to define quantum states of simple bivectors and then to define quantum states for tetrahedra as (non-commutative) functions of the same bivectors as \eqref{Tetrahedron_WaveFunction_BiVectors}. A (different, a priori) 4-simplex amplitude can therefore also be defined in terms of quantum bivectors $x$ associated with the triangles in the simplicial complex:
\be
    \cA_{\s}' 
    = 
    \Big(\prod_{\alpha=1}^{4} \, \hat{\cC}_{\tau_{\alpha}}(\{x_{\alpha;a}\})\Big) \star
    \prod_{\alpha=1}^4 \prod_{a=1}^4
    \delta_\star(x_{\alpha;a} - x_{\alpha+a;5-a}) \,,
    \label{4Simplex_Amplitude_BiVectors}
\ee
where this time $x_{\alpha;a}$ indicates the $a^{th}$ triangle of the tetrahedron $\alpha$, with $\alpha=1,\dots,5$ and $a=1,\dots,4$.
As for the amplitude \eqref{4Simplex_Amplitude}, the restriction to geometric configurations is obtained by the closure of four out of the five tetrahedra $\tau_b$. The four closures are encoded in the functions $\hat{\cC}_{\tau_b}$.
Similarly, a tetrahedron amplitude \eqref{TetrahedronAmplitude} simply provides the identification of the triangle bivectors associated with that tetrahedron in the two 4-simplices sharing it:
\be
    \cA_{\tau}' = \, \prod_{a=1}^4 \, \delta_\star(x_{\alpha;a} - x_{\beta;a}) \,.
\ee
The full quantum amplitude for the whole simplicial complex is then obtained by integration (with Lebesgue measure) of the complete set of bivector data entering the 4-simplex and tetrahedral amplitudes. We should recall, however, that the bivectors appearing in the above formula are in fact functions of edge vectors (they are simple bivectors, by construction, to encode the geometry of the simplicial complex). Therefore, the integration defining the full quantum amplitude should be performed over the set of edge vectors. This bivector construction gives, a priori, a different quantum amplitude than \eqref{NewSFAMplitude}. It is, on the other hand, much closer to the spin foam construction using the non-commutative flux representation put forward in \cite{BaratinOriti:2011GFTBarrettCrane}, giving a version of the Barrett-Crane model, and in \cite{BaratinOritimodel} for Holst-Plebanski gravity (including the Immirzi parameter). Indeed, the amplitudes expressed as functions of bivectors \eqref{4Simplex_Amplitude_BiVectors} can be obtained as part of the full quantum amplitudes in terms of edge vectors, as we are going to see in the next section, where the connection to the Barrett-Crane amplitudes is shown.

\subsection{Expansion in Lorentz group representations} \label{sec:Plancherel}
In the following, we show how the new spin foam amplitudes can be expressed in terms of irreps of the Lorentz group, i.e. in the more familiar spin foam language. This shows how they correspond to a nontrivial combination of Barrett-Crane amplitudes, constructed in such a way as to ensure the proper correspondence with edge vector geometries. We show this for the case of timelike bivectors, thus purely spacelike tetrahedra, using the tools in section \ref{Sec_InfiniteDimReps}.
\paragraph{The quantum minkowskian timelike bivector}
Let us recall that the quantization of bivectors in Minkowski space $\M^4$ is based on considering the elements of the dual Lorentz Lie algebra $\so^*(1,3)$. 
Our construction based on quantum edge vectors can be explicitly related to the usual Lorentz representations also introduced in section \ref{Sec_InfiniteDimReps}. For general Lorentz group elements:
\be
    g = \, \exp{\left(\sum_a \, (\alpha_a N_a + i\beta_a L_a)\right)} \,.
    \label{UnitaryRepLorentz}
\ee
one can define the Hilbert space of the square-integrable functions on the Lorentz group $L^2[\SO(1,3)] \cong F(\SO(1,3)) \ni f(g)$, and (similarly to the bivector (Fourier decomposed) wave functions we derived in  \eqref{BiVecWaveFunc_FourierDecomposition}),
we have the generalized Fourier decomposition:

\be
    f(g) :=  
    \sum_{n_r=0}^{\infty} \int \dd \mu \mu^2 \, 
    \sum_{\ell,\ell'} \sum_{m,m'} \,
    \la n_r, \mu ; \ell, m \,| g|\, n_r, \mu ; \ell', m ' \ra \, 
    f^{n_r, \mu ; \ell, m}_{n_r, \mu ; \ell', m'} \,.
    \label{LorentzWaveFunct_FourierDecomposition}
\ee
This is the standard Plancherel decomposition, with the generalized plane waves associated with unitary irreducible representations of the Lorentz group given by the standard $D$-matrices:
\be
    D^{0,\mu}_{\ell,m;\ell',m'}(g) :=
    \la \mu ; \ell, m \,| g |\, \mu ; \ell', m ' \ra \,.
    \label{time_likeD_matrix}
\ee
In our quantum geometric context, we are only interested in the case $j=0$, as we consider only the balanced \textit{timelike} representations. 
The balanced condition is automatically obtained when expanding functions that are invariant under (right-acting) rotations:
\be
    \int \dd u \, f(gu) = f(g) \,,
\ee
for all the rotations $u \in \SO(3)$. This condition restricts the Hilbert space of the Lorentz group to the space of functions on the upper hyperboloid $Q_1$, introduced with the Plancherel decomposition in section \ref{Sec_InfiniteDimReps}, and for which we had discussed the geometric interpretation. In this case, the Fourier decomposition \eqref{LorentzWaveFunct_FourierDecomposition} reduces to the expansion in terms of \lq timelike\rq   balanced infinite dimensional irreducible representations of the group\footnote{To avoid confusion, we recall that the label \lq timelike\rq  for the representations refers to the corresponding bivectors, which are then associated to -spacelike- triangles.}, setting the angular momentum quantum numbers $(l,m)$ equal to $0$:
\be
    f(g) :=  
    \sum_{n_r=0}^{\infty} \int \dd \mu \mu^2 \,
    \la n_r, \mu ; 0,0 \,| g |\, n_r, \mu ; 0,0 \ra \, 
    f^{n_r, \mu;0,0}_{n_r, \mu;0,0} \,,
    \label{LorentzWaveFunct_FourierDecomposition_TimeLike}
\ee
where we identify the non-radial part of the kernel as\footnote{The extension to the full harmonic oscillator basis $|\, n_r, \mu ; \ell', m ' \ra$ is trivial, as the plane wave given by $\la n_r, \mu ; \ell, m \,| U(g) |\, n_r, \mu ; \ell', m ' \ra$ is diagonal in the radial contribution.}:
\be
    D^{0,\mu}_{0,0;0,0}(g) :=
    \la n_r,\mu ; 0,0 \,| g |\, n_r,\mu ; 0,0 \ra \,.
    \label{TimeLike_Kernel}
\ee
Furthermore let us recall that the explicit expression for these plane waves in the hyperbolic basis given by \eqref{HyperbolicBasis} and once we restrict to $n_r=0$
we obtain the explicit expression for the $D$-matrix in the hyperbolic basis
\be
    D^{0,\mu}_{0,0;0,0}(g) =
    \la 0, \mu ; 0,0 \,| g |\, 0, \mu ; 0,0 \ra =
    \int \dd \eta \, Q_{0}^{1-\mu^2}(\coth\eta) \, Q_{0}^{1-\mu^2}(\coth\eta') \,,
    \label{CanonicalD_Expresion}
\ee
where $\eta'$ is obtained by the action of the transformation $g$ on the Legendre function $Q$ with the coordinates $\eta$.  This is the canonical $D$ matrix, which we will now express in the harmonic oscillator basis.
This can be achieved since we are able to expand the function on the Lorentz group in the Fourier modes of translations using the coherent states of the harmonic oscillator. This yields the harmonic decomposition
\be
    f(g) :=  
    \int \dd \alpha \dd \alpha' \, 
    \la \alpha_{\nu} \,| g |\, \alpha_{\nu}' \ra \, 
    f_{\alpha_{\nu},\, \alpha_{\nu}'} \,.
    \label{LorentzWaveFunct_FourierDecomposition_Translation}
\ee
The plane wave that appears in the decomposition
are related to the standard Lorentz representations obtained through the Plancherel decomposition \eqref{LorentzWaveFunct_FourierDecomposition}, by the change of basis \eqref{HarmonicOscillator_CoherentMinkowskian} and \eqref{HarmonicOscillator_HyperbolicCoherent}:
\begin{align}
    \la \alpha_{\nu} \,| g |\, \alpha_{\nu}' \ra &=
    e^{-\frac{1}{2}(|\alpha|^2 + |\alpha'|^2)} 
    \sum_{\{n_{\nu},n_{\nu}'\}} \sum_{n_r=0}^{\infty} \sum_{\ell,\ell'} \sum_{m,m'} \,
    \frac{\alpha_t^{n_t} \alpha_x^{n_x} \alpha_y^{n_y} \alpha_z^{n_z} {(\alpha_t')^*}^{n_t'} {(\alpha_x')^*}^{n_x'} {(\alpha_y')^*}^{n_y'} {(\alpha_z')^*}^{n_z'}}{\pi^8 \sqrt{n_t! n_x! n_y! n_z! n_t'! n_x'! n_y'! n_z'!}}\nonumber\\
   & \times \cC^{n_t,\, n_x,\, n_y,\, n_z}_{0,\, \mu,\, \ell,\, m} \cC^{n_t',\, n_x',\, n_y',\, n_z'}_{0,\, \mu',\, \ell',\, m'}\, D^{0,\mu}_{\ell,m;\ell',m'}(g)    \,.
\end{align}
Once again we restrict the Fourier expansion to the timelike sector on $Q_1$. In terms of edge vector coordinates, the upper hyperboloid is obtained by the intersection of a pair of co-linear light cones centered at the timelike points parametrized by $P_i = (0,x_i,y_i,z_i) \in \M^4$ of Minkowski space:
\be
    \left\{\,\,
    \begin{aligned}
        & t^2 = A \big((x-x_{1})^2 + (y-y_{1})^2 + (z-z_{1})^2 \big) \,,\\
        & t^2 = A \big((x-x_{2})^2 + (y-y_{2})^2 + (z-z_{2})^2 \big) \,,\\
    \end{aligned}
    \right.
    \quad\Rightarrow\quad
    t^2 - x^2 - y^2 - z^2 = R^2 \,,
\ee
with $R^2 = 2A \big((x_1-x_2)^2 + (y_1-y_2)^2 (z_1-z_2)^2\big)$ being the hyperbolic distance between two points.
We can therefore restrict the space of the translation group to the null edge vectors with unitary time coordinate 
\be
    e = (1,\lambda_x,\lambda_y,\lambda_z)
    \,\,,\quad 
    |e|^2 := 1- \lambda_x^2 + \lambda_y^2 + \lambda_z^2 = 0 \,.
    \label{TimeLike_Condition}
\ee
Note that, this can be regarded as the gauge condition\footnote{Furthermore, we note that the condition \eqref{TimeLike_Condition} can be written in terms of the Euler coordinates of a bivector \eqref{BiVector_Lorentz}, whose coordinates are explicitly given in \eqref{BiVectorCoordinates_Lorentz-Translation}.} on the edge vectors that allows recovering the timelike simple bivectors spanning the tangent space of the hyperboloid $Q_1$ (for more details on hyperbolic geometry we refer to \cite{Barrett2015}). Moreover, it is easy to check that this condition correctly implies that the bivector in such parametrization is timelike: $|b|^2 > 0$.\\
With this condition, the constrained Lorentz wave function in   \eqref{LorentzWaveFunct_FourierDecomposition_TimeLike} is automatically decomposed as a combination of edge vectors (or equivalently, generators of translations).\\
Now we arrive at the last map we need to implement in our construction. Note that \eqref{UnitaryRepLorentz} provides the relation between the bivector wave function \eqref{BiVector_WaveFunction} and the one on the Lorentz group. 
This type of relation between functions on the Lorentz group and functions on the dual Lorentz algebra is encoded in the non-commutative Fourier transform. This is basically an intertwining map
between the group and algebra representation (and vice-versa), ensuring their unitary equivalence. It is given by the integral transform:
\be \label{NCFourier}
    \hat{f}(x) = \int \dg \, e_{\star}(g \,,\, x) \, f(g) \,,
\ee
where $e_{\star}(g \,,\, x)$ is the star (non-commutative) exponential for the Lorentz group. We referred to it throughout this paper as the non-commutative plane wave. Its characterizing equations can be derived by requiring that the intertwined function spaces define a representation of the same underlying quantum algebra and applying the action of unitary operators on the various representations. Moreover, it is important to note that deriving an explicit expression of this non-commutative plane wave depends on the choice of a quantization map as was already discussed in section \ref{Sec_QuantumTriangle}. For more details see \cite{GuedesOritiRaasakka:2013NonCommFourier,FinocchiaroOriti:2018SFDufloMap,OritiRosati:2018GFTNonCommFourier}. \\
Since we are restricting the Lorentz wave functions to those that are invariant under rotations given in  \eqref{LorentzWaveFunct_FourierDecomposition_TimeLike}, the non-commutative Fourier transform allows to express the bivector functions in  \eqref{BiVector_WaveFunction} in terms of $D$-matrices  \eqref{TimeLike_Kernel} and it yields:
\be
    \hat{f}(x) = \int \dg \dd \mu \, \mu^2 \, e_{\star}(g \,,\, x) \,
    D^{0,\mu}_{0,0;0,0}(g) \, f^{0,\mu;0,0}_{0,\mu;0,0} \,.
\ee
More importantly, a similar expansion can be obtained for the bivector wave function in terms of the edge vector coordinates. Here, the group element $g \in \SO(1,3)$ is parametrized as in \eqref{UnitaryRepLorentz}, whereas the exponent is the bivector operator given by \eqref{QuantumBiVector} i.e. as a linear combination of the ladder operators, expressed in terms of the coordinates $(\lambda_1,\lambda_2) \in \M^4$:
\be
\label{BCexpansion_star_product}
    \hat{f}(\lambda_1,\lambda_2) = 
    \int \dg \dd \mu \, \mu^2 \, e_{\star}(g\,,\, x_{12}\left(\lambda_1,\lambda_2\right)) \,
    D^{0,\mu}_{0,0;0,0}(g) \, f^{0,\mu;0,0}_{0,\mu;0,0} \,,
\ee
with $\ell,\ell'=0$ for timelike bivectors $|b|^2$, that are obtained by the skew-symmetric product of the translation generators (the ladder operators) and must satisfy the condition \eqref{TimeLike_Condition} whereas the notation $x_{12}\left(\lambda_1,\lambda_2\right)$ denotes the bivector associated with the pair of edge vectors $\left(\lambda_1,\lambda_2\right)$. The non-commutative edge vector-based construction of a quantum triangle and quantum tetrahedron proceeds then following the steps in section \ref{Sec_QuantumTriangle}.-\ref{Sec_QuantumTetrahedron}.
\paragraph{Relation to the Barrett-Crane amplitudes.}
The expansion \eqref{BCexpansion_star_product} is the root of the relation between new spin foam amplitudes \eqref{NewSFAMplitude} and those of the Barrett-Crane model \cite{BarrettCrane:1999BarrettCrane1,Perez:2000ec} for the case of timelike bivector-based geometries.\\
Starting from the expression of a single simplex amplitude $\mathcal{A}_s$ expressed in terms of functions of edge vectors as depicted in \eqref{4Simplex_Amplitude}, we next reformulate it in terms of functions of bivectors by applying the relation derived in \eqref{BCexpansion_star_product}. Moreover, note that we can write down the integral expansion of the non-commutative delta function for the gluing operation of tetrahedra in terms of the non-commutative plane waves, namely
\be
\label{NCdelta_decomposition}
  \prod_{\alpha=1}^4\prod_{a=1}^4  \delta_{\star}\left(x_{\alpha ; a}-x_{\alpha+a ; 5-a}\right)=\prod_{\alpha=1}^4\prod_{a=1}^4\int \mathrm{d} h_{\alpha, a} \, e_{\star}\left(h_{\alpha, a}, x_{\alpha ; a}-x_{\alpha+a ; 5-a}\right)\,,
\ee\
where $h_{\alpha,a}$ is the group element associated with the bivector of the common triangle $a$ that glues a pair of tetrahedra along the shared face. The same can be conducted for the closure constraint at the level of each triangle of the tetrahedra where we obtain:

\be
    \delta_{\star}\left(\sum_{a=1}^4 x_{\alpha ; a}\right)=
    \int \mathrm{~d} h_a \prod_{a=1}^4 \prod_{i=1}^5 e_{\star}\left(h_a, x_{\alpha ; a}(\lambda_{\alpha;i},\lambda_{\alpha;i+1})\right) .
\ee
 where $h_a$ is the gluing element. Now, putting everything together in a single simplex amplitude and combining the non-commutative plane waves and the $D$-functions we obtain 
\be
  \begin{aligned}
  \cA_{\s} = \int [\dd g] \, [\dh]^5 \dd^5 \mu \,   &
       \prod_{\alpha=1}^5 \prod_{a=1}^4 \prod_{i=1}^5 \, \mu_{\alpha}^2 \,
       e_{\star}(h_{\alpha} \,,\, x_{\alpha;a}(\lambda_{\alpha;i},\lambda_{\alpha;i+1})) \,
        D^{0,\mu_{\alpha}}_{0,0;0,0}\big(g(x_{\alpha;i}) g\mone(x_{\alpha+i;5-i})\big) \,,
        \label{new_SF_amplitude}
    \end{aligned}
\ee

and it is given in terms of the bivector coordinates $x$ (in the Euler angle parametrization), expressed as a combination of the coordinates of the edge vectors $\lambda$. Recall that $\lambda_{\alpha;i}$ stands for the coordinate of the $i^{th}$ edge of the tetrahedron $\alpha$, for $\alpha=1,\dots,5$ and $i=1,\dots,6$.  \\
In this picture, the timelike canonical $D$ matrices given in  \eqref{TimeLike_Kernel}, associated with  the timelike irreducible representations of the Lorentz group, can be computed explicitly as in  \eqref{CanonicalD_Expresion} and expressed 
\be
    D^{0 \, \mu}_{0,0;0,0}(g_1g_2\mone) = K_{\mu}(d_{\eta}(\mathbf{x}_{g_1},\,\mathbf{x}_{g_2})) =
    \frac{\sin(\mu d_{\eta})}{\mu \sinh d_{\eta}} \,,
    \label{DMatrices_BC}
\ee
where $d_{\eta}$ is the hyperbolic distance between two points $\mathbf{x}_{g_i}$ on the hyperboloid $Q_1$ associated with  the Lorentz transformations $g_i$. It is obvious now that, the amplitudes associated with a tetrahedron and to a 4-simplex can be expressed as a combination of the terms \eqref{DMatrices_BC}. This construction is realized by representing the terms \eqref{DMatrices_BC} as the graph dual to tetrahedra. In particular, we associate a set of coordinates $\mathbf{x} \in Q_1$ on the hyperboloid $Q_1$ to a given vertex of the graph, one parameter $\mu$ is attached to each edge of the graph, and the term \eqref{DMatrices_BC} is associated with each internal edge of the graph (we refer to Fig.(\ref{Fig:BarrettCraneVertex})).
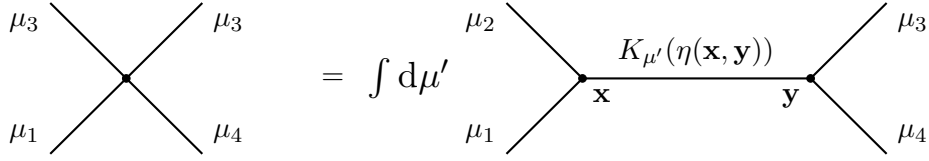
\begin{figure}
    \centering
    \begin{tikzpicture}
    \coordinate (a) at (-1,-1);
    \coordinate (b) at (-1,1);
    \coordinate (c) at (1,1);
    \coordinate (d) at (1,-1);
    
    \coordinate (i) at (6,0);
    
    \coordinate (a') at ($ (a) + (i) $);
    \coordinate (b') at ($ (b) + (i) $);
    \coordinate (c') at ($ (c) + 1.5*(i) $);
    \coordinate (d') at ($ (d) + 1.5*(i) $);
    
    \coordinate (x) at ($ 0.5*(a') + 0.5*(b') + (1,0) $);
    \coordinate (y) at ($ 0.5*(c') + 0.5*(d') - (1,0) $);
    
    \node[left] at ($ 0.25*(a') + 0.25*(b') + 0.25*(c) + 0.25*(d) $) {$=$};
    \node[right , scale=1.25] at ($ 0.25*(a') + 0.25*(b') + 0.25*(c) + 0.25*(d) $) {$\int \dd \mu'$};
    
    \draw[fill] 
    ($ 0.5*(a) + 0.5*(b) + 0.5*(c) + 0.5*(d) $) circle [radius=0.05]
    (x) circle [radius=0.05] node[below right] {$\mathbf{x}$}
    (y) circle [radius=0.05] node[below left] {$\mathbf{y}$};
    
    \draw[- , thick] 
    (a) node[above left] {$\mu_1$} -- (c) node[below right] {$\mu_3$}
    (b) node[below left] {$\mu_3$} -- (d) node[above right] {$\mu_4$}
    (a') node[above left] {$\mu_1$} -- (x)
    (b') node[below left] {$\mu_2$} -- (x)
    (c') node[below right] {$\mu_3$} -- (y)
    (d') node[above right] {$\mu_4$} -- (y)
    (x) -- node[above , scale=1] {$K_{\mu'}(\eta(\mathbf{x},\mathbf{y}))$}
    (y);
\end{tikzpicture}
    \caption{The tetrahedron amplitude of the Barrett-Crane model \cite{BarrettCrane:1999BarrettCrane1}}
    \label{Fig:BarrettCraneVertex}
\end{figure}
By merging together several of these graphs dual to tetrahedra, one can construct the \textit{relativistic spin networks}.%, which are graphs with no boundary dual to 4-dimensional triangulations.
The relativistic spin networks were first introduced by Barrett and Crane in \cite{Barrett1999}, further studied in \cite{Reisenberger:1998SpinNetworkRelativistic,Barrett1998}, and are at the root of the Barrett-Crane spin foam model \cite{Perez:2003SpinFoam,Perez2000}.
\smallskip \\
The amplitude obtained in \eqref{new_SF_amplitude} presents itself as a nontrivial combination of such terms, i.e. BC amplitude-like terms, for the case of timelike bivectors (spacelike tetrahedra). 
We could also expect the full amplitude (\ref{NewSFAMplitude}) to reduce to a Barrett-Crane-like model with standard BC vertex and possibly complicated remaining measures, once all edge vectors and group elements (discrete connection) are integrated out. \\
Still, having formulated explicitly the model using the edge vectors (exploiting their relation to the translation group), and thus controlling in detail the full simplicial quantum geometries, we get a rather more sophisticated quantum amplitude for them. This further information, compared to the usual formulation of spin foam models (including the Barrett-Crane model), is encoded in the non-commutative plane waves appearing in \eqref{new_SF_amplitude}. \\
These further data ensure that no geometric information or condition is missing in the quantum amplitudes, from the discrete gravity point of view. This may not be necessarily apparent at the level of the quantum amplitudes \ref{NewSFAMplitude} for a closed triangulation, but it will become so for \lq transition amplitudes\rq  between non-trivial boundary data and when  evaluating quantum geometric observables. \\ 
The construction also confirms \cite{BaratinOriti:2011GFTBarrettCrane,JercherOritiPithis:2021GFTEmergentBarrettCrane} the viability of the Barrett-Crane quantization prescription, provided it is embedded in \lq a more sensitive state sum construction\rq \cite{CraneYetter}. \\
Finally, notice that the edge-vector construction, here provided explicitly for spacelike tetrahedra, can be straightforwardly extended to timelike ones, as for the Barrett-Crane model itself.

\section{Group field theory formulation}
\label{SubSec_GFTEdgeVectors}

Like any other spin foam amplitude for quantum geometry \cite{Oriti:2014GFTLQG}, the new spin foam amplitude \eqref{NewSFAMplitude} can be obtained as the Feynman amplitude of a group field theory \cite{GFT_thomas,Oriti:2014GFTLQG}. \\
The group field theory formulation provides a complete definition of the quantum dynamics of simplicial quantum geometry, i.e. a complete definition of the spin foam model. Indeed, it defines also a continuum limit of the model, via a sum over simplicial complexes (spin foams), thus removing the dependence on the initial complex. In the following, we give briefly some more details on this GFT formulation of the new model.\\

In this case, the fundamental GFT has the same general definition as the tetrahedron wave function given by \eqref{TetrahedronWaveFunction_Edges} expressed as a function on three copies of the translation group $F(\M^4)^{\times 3}$ with the proper closure conditions for the three triangles. More symmetrically, it can be given also as a function on six copies of the translation group constrained by the closure condition in \eqref{ClosureConstraint}.
As we encountered in sections \ref{Sec_QuantumTriangle}-\ref{Sec_QuantumTetrahedron}, each edge vector $e_i$ of the tetrahedron $\tau$ is described by a set of (four) coordinates $\lambda_i \in F(\bbR^4)$. Therefore, the fundamental GFT field, like the tetrahedral wavefunction, is expressed as a function:
\be
     \hat{\Phi}(\lambda_1,\lambda_2,\lambda_3,\lambda_4,\lambda_5,\lambda_6) \,\,\in 
     L^2[\lambda_1,\lambda_2,\lambda_3,\lambda_4,\lambda_5,\lambda_6] \cong F(\bbR^4)^{\times 6} \,,
    \label{Tetrahedron_WaveFunction}
 \ee
 subject to the closure conditions \eqref{ClosureConstraint}
 \be
     \hat{\Phi}_t = (\hat{\cC}_t \star \hat{\Phi}) \,,    
 \ee
 where the functions satisfy a nontrivial star product. 
As discussed above, a special case of tetrahedral states and wavefunctions (thus GFT fields) is given by functions of edge vectors depending only on the triangle bivectors constructed from them, that is by functions 
 \be
     (\hat{\cC}_t \, \hat{\Phi})(\lambda_1,\, \dots,\, \lambda_6) \equiv
    (\hat{\cC}_{\tau} \, \hat{\Phi})(x_1,\, x_2,\, x_3,\, x_4) = 
     \delta(x_1 + x_2 + x_3 + x_4) \star \hat{\Phi}(x_1,\, x_2,\, x_3,\, x_4) =\hat{\Phi}_b\,,
    \label{Tetrahedron_WaveFunction_BiVectors2}
 \ee
 where the field $\hat{\Phi}_b$ is a subclass of the generally defined GFT field characterized by the bivector construction. 
In turn, this class of GFT fields (and tetrahedral wavefunctions) can be mapped to functions of four copies of the Lorentz group (subject to diagonal gauge invariance) by non-commutative Fourier transform (see \cite{OritiRosati:2018GFTNonCommFourier}), and to functions of irreps of the Lorentz group by means of the Plancherel decomposition. 
 \be
    \Phi \in F(\SO(1,3)^{\times 4}) 
     \quad\overset{\star Fourier}{\longrightarrow}\quad
     \hat{\Phi}_t \in F(\bbR^6_{\star}) \cong F(\so^*(1,3))
    \quad\overset{eq. \eqref{Tetrahedron_WaveFunction_BiVectors}}{\longrightarrow}\quad
\quad\overset{eq. \eqref{Tetrahedron_WaveFunction_BiVectors}}{\underset{\cC: \, \eqref{Lorentz-Translations_Coefficients}}{\longrightarrow}}\quad
    \hat{\Phi}_b \in F(\bbR^4 \wedge \bbR^4) \,.
 \label{Chain_Field}
 \ee
 \medskip \\
Both these maps define a unitary equivalence and bring our edge vector formulation in the mathematical language more routinely used in the spin foam and group field theory context. Of course, the other expansion \eqref{BiVecWaveFunc_FourierDecomposition} is also available and defines one more unitary equivalence map 

\be
     \hat{\Phi}(\lambda_1,\, \dots,\, \lambda_6) =
     \int \dd \alpha^6 \dd {\alpha'}^6 \,
     T_{\alpha_{1;\nu}}^{\alpha_{1;\nu}'}(\lambda_2,\lambda_3) \,
     T_{\alpha_{2;\nu}}^{\alpha_{2;\nu}'}(\lambda_3,\lambda_5) \,
    T_{\alpha_{3;\nu}}^{\alpha_{3;\nu}'}(\lambda_6,\lambda_5) \,
     T_{\alpha_{4;\nu}}^{\alpha_{4;\nu}'}(\lambda_2,\lambda_6) \, \hat{\Phi}^{\alpha_{1;\nu},\,\alpha_{2;\nu},\,\alpha_{3;\nu},\,\alpha_{4;\nu}}_{\alpha_{1;\nu}'\,\alpha_{2;\nu}'\,\alpha_{3;\nu}'\,\alpha_{4;\nu}'} \,.
     \label{TetraWaveFunct_FourierDecompositionGFT}
 \ee
 where we used the compact notation $T_{\alpha_{\nu}}^{\alpha_{\nu}'}(\lambda_1,\lambda_2)$ for the plane waves $\la \alpha_{\nu} \,| \adag(\lambda_1) a(\lambda_2) - \adag(\lambda_2) a(\lambda_1) |\, \alpha_{\nu}' \ra$.
The GFT action for such fields, producing the desired amplitudes \eqref{NewSFAMplitude}, is given by an interaction term that encodes the combinatorics of a 4-simplex and its quantum geometry, and a propagator enforcing the required gluing conditions. The kernels for such interaction and the kinetic term producing the required propagator are exactly the 4-simplex amplitude  \eqref{4Simplex_Amplitude}, and the tetrahedral amplitude  \eqref{TetrahedronAmplitude}, the latter being in fact equal to the propagator itself. \\
The action of the GFT model (with the coupling constants set to 1 for simplicity), is\footnote{Of course, all the generalizations available for other GFT and spin foam models, in terms of combinatorics of the relevant cellular complexes (thus, interaction terms), field coloring (to control the resulting cellular topologies), and dynamical kernels are available for this model as well.}:
 \be
     \cS = 
     \int \dd \lambda^6 \, (\hat{\Phi} \star \hat{\Phi})(\{\lambda_i\}) +
 \int \dd \lambda^{12} \, (\hat{\cK} \star (\hat{\Phi} \cdot \hat{\Phi}))(\{\lambda_i;\lambda_j\})\,.
 \ee
The \textit{kinetic term} (identical to the \textit{propagator}) is taken thus to be
 \be
     \hat{\cK} = \prod_{i=1}^6 \, \delta(\lambda_i-\lambda_{i+6}) \,,
 \label{Propagator_Amplitude}
 \ee
 and enforces the identification of the edge vector data,
 while 
 \be
     \begin{aligned}
         \cS_{\cV} =
         \int \dd \lambda^{10} \,
        &
         (\hat{\cC} \, \hat{\Phi})(\lambda_{1},\lambda_{2},\lambda_{3},\lambda_{4},\lambda_{5},\lambda_{6}) \star
        (\hat{\cC} \, \hat{\Phi})(\lambda_{7},\lambda_{3},\lambda_{8},\lambda_{1},\lambda_{9},\lambda_{2}) \star
         (\hat{\cC} \, \hat{\Phi})(\lambda_{5},\lambda_{8},\lambda_{10},\lambda_{7},\lambda_{4},\lambda_{3}) \star
     \nonumber \\
         &
         (\hat{\cC} \, \hat{\Phi})(\lambda_{9},\lambda_{10},\lambda_{6},\lambda_{5},\lambda_{1},\lambda_{8}) \star
      (\hat{\cC} \, \hat{\Phi})(\lambda_{4},\lambda_{6},\lambda_{2},\lambda_{9},\lambda_{7},\lambda_{10}) 
        \nonumber \\
         =
      \int \dd \lambda^{30} \, 
       & \Big(\hat{\cV}(\{\lambda_{\alpha;i}\}) \star
        \big((\hat{\cC} \, \hat{\Phi})(\lambda_{1;1},\lambda_{1;2},\lambda_{1;3},\lambda_{1;4},\lambda_{1;5},\lambda_{1;6}) \,
        (\hat{\cC} \, \hat{\Phi})(\lambda_{2;1},\lambda_{2;2},\lambda_{2;3},\lambda_{2;4},\lambda_{2;5},\lambda_{2;6}) \,
     \nonumber \\
         & \,\,\,
         (\hat{\cC} \, \hat{\Phi})(\lambda_{3;1},\lambda_{3;2},\lambda_{3;3},\lambda_{3;4},\lambda_{3;5},\lambda_{3;6}) \,
         (\hat{\cC} \, \hat{\Phi})(\lambda_{4;1},\lambda_{4;2},\lambda_{4;3},\lambda_{4;4},\lambda_{4;5},\lambda_{4;6}) \,
     \nonumber \\
        & \,\,\,
        (\hat{\cC} \, \hat{\Phi})(\lambda_{5;1},\lambda_{5;2},\lambda_{5;3},\lambda_{5;4},\lambda_{5;5},\lambda_{5;6})\big)\Big) \,,
    \end{aligned}
 \ee
 is the \textit{interaction term} whose amplitude is given by
 \begin{align}
     \hat{\cV} 
     = \, &
     \big(\delta(\lambda_{1;1}\lambda_{2;4}\mone) \, \delta(\lambda_{1;1}\lambda_{4;5}\mone)\big) \,
 \big(\delta(\lambda_{1;2}\lambda_{2;6}\mone) \, \delta(\lambda_{1;2}\lambda_{5;3}\mone)\big) \,
     \big(\delta(\lambda_{1;3}\lambda_{2;2}\mone) \, \delta(\lambda_{1;3}\lambda_{3;6}\mone)\big) \,
     \big(\delta(\lambda_{1;4}\lambda_{3;5}\mone) \, \delta(\lambda_{1;4}\lambda_{5;1}\mone)\big) \,
     \nonumber \\
     &
     \big(\delta(\lambda_{1;5}\lambda_{3;1}\mone) \, \delta(\lambda_{1;5}\lambda_{4;4}\mone)\big) \,
     \big(\delta(\lambda_{1;6}\lambda_{4;3}\mone) \, \delta(\lambda_{1;6}\lambda_{5;2}\mone)\big) \,
    \big(\delta(\lambda_{2;1}\lambda_{3;4}\mone) \, \delta(\lambda_{2;1}\lambda_{5;5}\mone)\big) \,
     \big(\delta(\lambda_{2;3}\lambda_{3;2}\mone) \, \delta(\lambda_{2;3}\lambda_{4;6}\mone)\big) \,
    \nonumber \\
     &
     \big(\delta(\lambda_{2;5}\lambda_{4;1}\mone) \, \delta(\lambda_{2;5}\lambda_{5;4}\mone)\big) \,
     \big(\delta(\lambda_{3;3}\lambda_{4;2}\mone) \, \delta(\lambda_{3;3}\lambda_{5;6}\mone)\big) \,,
     \label{4Simplex_Amplitude_GFT}
 \end{align}
 that is associated with a combination of five tetrahedra (expressed in terms of their edge vectors) forming a 4-simplex\footnote{Note that here, for reasons of simplicity, we denoted the inverse of a group element with the usual exponent $^{-1}$.}.
 
 As we explained, the GFT fields, and therefore the action, can be equivalently expressed in Lie algebra, group and representation variables for the Lorentz group, in full for restricted GFT fields or partially for generic ones (since only their dependence on skew-symmetric combinations of edge vectors define bivectors). The same decomposition can be performed at the level of the Feynman amplitudes.
  The Feynman diagrams of the model are by constructions obtained as gluings of 4-simplices to form (up to further topological subtleties \cite{GurauRyan:2011TensorReview}, four-dimensional simplicial complexes. As is generically the case, the Feynman amplitudes of our GFT model are then constructed by convoluting (with respect to shared edge vector data) interaction kernels, and propagators. It is immediate to see that they are given by the amplitudes \eqref{NewSFAMplitude}. 
 
\section{Discussion and outlook}

In this paper, we have defined a new model for $4d$ Lorentzian quantum geometry, based on edge vectors ensuring therefore a complete encoding of (quantum) simplicial geometry. Several elements of the construction were suggested earlier in \cite{CraneYetter}. The model has been constructed first at the level of amplitudes assigned to a given simplicial complex and then completed by a sum over complexes (thus, by a (formal) definition of its continuum limit) via a Group Field Theory formulation.
The key technical tool for the construction was the encoding of edge vector data in representations of the translation group, and their subsequent connection to the representation theory of the Lorentz group. In turn, this connection was based on the theory of expansors and their reformulation as four-dimensional Lorentzian harmonic oscillators. The other technical tool was the non-commutative representation and Fourier transform for functions on Lie groups, developed for the Lorentz group in \cite{OritiRosati:2018GFTNonCommFourier}, based on the general techniques in \cite{GuedesOritiRaasakka:2013NonCommFourier}.\\
The amplitudes of the new model were then shown to relate directly to the Barrett-Crane quantization of simplicial geometry, upon expansion in terms of irreps of the Lorentz group. This suggests that they may reduce to a specific version of the Barrett-Crane model with the Barrett-Crane vertex amplitude completed by some possibly involved measure term encoding the relation between triangle bivectors (and areas) and edge vectors (and lengths). 
This relation indicates that the Barrett-Crane quantization of simplicial geometry has nothing wrong per se, as often claimed in the spin foam literature \cite{BaratinOriti:2011GFTBarrettCrane}, and the issue was rather to make sure that the geometric information not already contained in basis of eigenstates of area operators (balanced or "relativistic" spin networks) is also encoded correctly in the complete state sum/spin foam amplitude, as our construction allowed to do.\\

\noindent Our results provide a possible solution to what has been the main open issue in model building in the spin foam/GFT context (as far as models for Lorentzian Quantum Gravity are concerned) in the last decades. That is, to define amplitudes for simplicial complexes that correctly encode the complete set of geometric data (given by compatible edge vectors/lengths), and thus, when formulated in terms of bivectors, correctly encode the (simplicity) constraints reducing them to a compatible set of edge vectors.\\

\noindent Obviously, having a promising candidate solution does not imply that there are not many open issues to be tackled, or many avenues of further developments. We indicate some of them.\\

\noindent A first set concerns the quantum geometric features of the new model and its correspondence with discrete gravitational dynamics, as well as the analytic control over it.\\
We need to analyze further the properties of the non-commutative star product for functions of edge vectors and of bivectors and explore different choices and their mathematical consequences for the amplitudes. This is also needed to obtain more explicit and more easily computable expressions for its amplitudes, which have only been given here in rather implicit form (even though all the mathematical ingredients to compute them in principle have been provided). A comparison with other constructions from the non-commutative geometry literature (for functions on Minkowski space) \cite{Lizzi:2014pwa} and the import of tools developed there will certainly be beneficial, together with further work on the non-commutative Fourier transform on the Lorentz group proposed in \cite{OritiRosati:2018GFTNonCommFourier}. More generally, given how involved the new quantum amplitudes are from the analytic point of view, we need to develop new analytic techniques for their evaluation and complement them with numerical ones. In this respect, the task is the same one we face with other $4d$ spin foam models \cite{Dona:2022dxs}. Key tools may come from spinorial techniques, already applied successfully to $SU(2)$ spin networks \cite{Livine:2011gp} and whose associated star product has been studied also in a GFT context \cite{Dupuis:2011fx}, as well as from a Lorentzian generalization of earlier result on the formulation of spin networks in terms of harmonic oscillators \cite{Girelli:2005ii}. \\

While the encoding of simplicial geometry is manifest in our model, the connection to discrete gravity actions and thus discrete gravity dynamics is not, and it should be clarified. As for the other spin foam models \cite{FinocchiaroOriti:2018SFDufloMap}, the classical discrete gravity action appears when expressing the amplitudes in terms of both metric (edge vectors or bivectors) and connection (group) variables since the underlying gravity theory is a first-order one. For the new model the underlying action is expected to be a discretization of the Palatini action \cite{Caselle:1989hd} and it will be encoded in the explicit form of the non-commutative plane waves appearing when expanding in terms of simple bivectors (functions of edge vectors) and group elements, i.e. using \eqref{NCFourier}, before expanding further in Lorentz group irreps. The corresponding expression of the amplitudes should thus be analyzed in detail. Three aspects, in particular, should be investigated, in the resulting discrete gravity path integral. First, starting from this expression, a semiclassical approximation should be performed, as done in other spin foam models, to elucidate the reduction to simplicial geometry in terms of edge lengths, i.e. Regge geometries, in such approximation or, in other words, the dominance of solutions of the discrete Regge equations. Many results have been obtained for other quantum geometric models in this respect (see e.g \cite{Han_2013}), and will be useful for the new one as well. However, in other quantum geometric models typically one analyses the expression of the amplitudes in a spin network basis, i.e. in terms of eigenstates of triangle area operators, while the most straightforward strategy in our case will probably be along the lines of what is done in the $3d$ case in \cite{Oriti:2014aka}, where special attention has been placed on the non-commutative structure of the amplitudes, taking the semiclassical approximation directly in the expression showing them as non-commutative discrete gravity path integrals.\\
Second, the measure of the discrete gravity path integral defined by this expression (or after further Plancherel expansion) contains the constraints reducing the larger set of triangle areas (and bivectors) to the smaller set of edge lengths. The reduction of the first to the second, thus from so-called \lq discrete area metrics\rq to the usual piecewise-flat Regge geometries, has been another open issue in the last decades \cite{Dittrich:2023ava, Asante:2018wqy, Barrett:1997tx}. The analysis of the amplitudes of the new model should cast light on this issue as well.
Third, the role of degenerate simplicial geometries in the new state sum should also be investigated. Their suppression was indeed the main motivation for the construction of a model based on edge vectors in \cite{CraneYetter}.\\

\noindent At the mathematical level, the new amplitudes are constructed using a subtle interplay between the representation theory of the translation group and of the Lorentz group. We expect that, underlying the model, one can find a formulation in terms of the representation theory of the Poincar\'e  group, indeed often invoked as the relevant gauge symmetry of Palatini gravity \cite{Caselle:1989hd}. In fact, the mathematical analysis of the model may unravel exciting categorical structures, since our construction bears a number of points of contact with the categorical constructions in \cite{Girelli:2021zmt,Girelli:2022bdf,Asante:2019lki} based on the Poincar\'{e} 2-group, and the earlier ones in \cite{Baratin:2014era, Baratin:2009za}. A categorical formulation of this new state sum based on edge vectors was in fact suggested already in \cite{CraneYetter}. \\

\noindent Another set of further developments concerns the applications of a new model to the broader issues of quantum gravity.\\ 
For this new one, like for all other quantum geometric models, the main points to be established are the well-posedness of quantum theories, the continuum limit, and the extraction of effective gravitational physics from it (showing the connection to continuum GR). These tasks are most efficiently tackled in the Group Field Theory formulation, as it is being done for other $4d$ quantum geometric models as well as for simpler models. \\
The first two issues become those of perturbative renormalizability of the corresponding GFT model \cite{Carrozza:2016GFTRenGroup1} and of its non-perturbative RG flow and phase diagram \cite{Marchetti:2022nrf,Eichhorn:2021vid}. The third has many facets and requires to be tackled from several directions. It has been addressed partially, at least in the cosmological sector, in the context of GFT condensate cosmology \cite{Oriti:2016GFTCondensateEmergent,Oriti:2021GFTCondensateEmergent,Pithis:2019tvp}, via a mean field approximation (thus bypassing the spin foam formulation, corresponding to the perturbative expansion of the quantum dynamics) in which the universe is treated as a quantum condensate of GFT quanta, i.e. quantum tetrahedra. The new model of quantum geometry we have constructed should be tested also in this context.

\appendix

\section{Four dimensional Lorentzian harmonic oscillator}
\label{App_4dHarmonicOscillator}
We present here the solutions of the four dimensional Lorentzian harmonic oscillator in different basis formulations.
The isotropic four dimensional harmonic oscillator is described by the Hamiltonian
\be
    \cH = - \frac{1}{2} \Delta + \frac{1}{2} (t^2 - x^2 - y^2 - z^2) \,,
\ee
with associated Schr\"odinger equation $\cH \, \Psi = E \, \Psi$, where $\Delta$ is the four dimensional Lorentzian Laplacian operator $\Delta = \partial_t^2 - \partial_x^2 - \partial_y^2 - \partial_z^2$.
Following \cite{Genest2013,laplacian_hyperbolic_solution}, we solve such wave equation in the Minkowskian, cylindrical, spherical, and hyperbolic coordinates.
\begin{itemize}
    \item[(i)] 
    \textit{Minkowskian coordinates}.
    The Minkowskian basis is $|n_t,n_x,n_y,n_z\ra$, with $n_x,n_y,n_z \in \bbN$ and $n_t \in \bbR$. 
    It is associated with  the eigenvalue $E = (n_t + 1/2) - (n_x + n_y + n_z + 3/2)$ with eigenbasis
    \be
        \Psi_{n_t,n_x,n_y,n_z}(t,x,y,z) =
        \psi_{n_t}(t) \, 
        \psi_{n_x}(x) \, \psi_{n_y}(y) \, \psi_{n_z}(z) \,,
        \label{MinkowskiBasis}
    \ee
    where $\psi_n$ are the Hermite functions.
    
    \item[(ii)]
    \textit{Cylindrical coordinates}.
    The cylindrical basis is $|n_t,n_{\rho},m,n_z\ra$, with $n_t,n_{\rho},n_z \in \bbN$, $n_t \in \bbR$ and $m \in \bbZ$. 
    It is associated with  the eigenvalue $E = (n_t + 1/2) - (2n_{\rho} + |m| + n_z +3/2)$ with eigenbasis
    \be
        \Psi_{n_t,n_{\rho},m,n_z}(t,\rho,\phi,z) =
        \psi_{n_t}(t) \, 
        \frac{(-1)^{n_{\rho}}}{\sqrt{\pi}} \sqrt{\frac{n_{\rho}!}{\Gamma(n_{\rho} + |m| + 1)}} \, e^{-\frac{1}{2}{\rho}^2} {\rho}^{|m|} L_{n_{\rho}}^{(|m|)}({\rho}^2) \, e^{im\phi} \, \psi_{n_z}(z) \,,
    \ee
    where $L_n^{(\alpha)}$ are the Laguerre polynomials with the proper renormalization.
    
    \item[(iii)]
    \textit{Spherical coordinates}.
    The spherical basis is $|n_t,n_R,\ell,m\ra$, with $n_t,n_R,\ell \in \bbN$, $n_t \in \bbR$ and $m \in -\ell, \dots ,\ell$. 
    It is associated with  the eigenvalue $E = (n_t + 1/2) - (2n_R + \ell +3/2)$ with eigenbasis
    \be
        \Psi_{n_t,n_R,\ell,m}(t,R,\theta,\phi) =
        \psi_{n_t}(t) \, 
        (-1)^{n_R} \sqrt{\frac{n_R!}{\Gamma(n_R + \ell + 3/2)}} \, e^{-\frac{1}{2}R^2} R^{\ell} L_{n_R}^{(\ell + 1/2)}(R^2) \, Y^m_{\ell}(\theta,\phi) \,,
    \ee
    where $Y^m_{\ell}$ are the spherical harmonics.
    
    \item[(iv)]
    \textit{Hyperbolic coordinates}.
    The hyperbolic basis is $|n_r,\mu,\ell,m\ra$, with $n_r,\ell \in \bbN$, $m \in -\ell, \dots ,\ell$ and $n_t,\mu \in \bbR$. 
    It is associated with  the eigenvalue $E = 2n_r + i\mu +1$ with eigenbasis
    \be
        \Psi_{n_r,\mu,\ell,m}(r,\eta,\theta,\phi) =
        (-1)^{n_r} \sqrt{\frac{n_r!}{\Gamma(n_r + \mu + 1/2)}} \,
        r^{\mu-1} e^{-\frac{1}{2}r^2} \, L^{(\mu)}_{n_r}(r^2) \,
        \frac{1}{\sinh\eta} \, Q_{\ell}^{i\mu}(\coth\eta) \,
        Y^m_{\ell}(\theta,\phi) \,,
        \label{HyperbolicBasis}
    \ee
    where $Q^{\alpha}_{\lambda}$ are the Legendre function with the proper normalisation constant
\end{itemize}
Note that by our choice of hyperbolic coordinates, we assumed the energy to be positive. This amounts to requiring that the solutions $\Psi$ are timelike. \\ 
For completeness, we show the details of the derivation of the eigenbasis in the hyperbolic coordinates. 
Let us consider the change of coordinates:
\be
    \left\{\,
    \begin{aligned}
        & t = r\cosh\eta \\
        & x = r\sinh\eta \sin\theta \cos\phi \\
        & y = r\sinh\eta \sin\theta \sin\phi \\
        & z = r\sinh\eta \cos\theta 
    \end{aligned}
    \right.
    \qquad \text{ with } \quad
    \left[\,
    \begin{aligned}
        & \eta \in \bbR \\
        & \theta \in [0,\pi] \\
        & \phi \in [0,2\pi] 
    \end{aligned}
    \right.
    \label{MinkowskianHyperbolic_ChangeCoordinates}
\ee
In these coordinates the Laplacian writes
\be
    \begin{aligned}
        \Delta 
        & =
        \partial_t^2 - \partial_x^2 - \partial_y^2 - \partial_z^2 \\
        & =
        \frac{1}{r^3} \partial_r(r^3 \, \partial_r \,\,\,)
        - \frac{1}{r^2 \sinh^2\eta} \partial_{\eta}(\sinh^2\eta \, \partial_{\eta} \,\,\,)
        - \frac{1}{r^2 \sinh^2\eta \sin\theta} \partial_{\theta}(\sin\theta \, \partial_{\theta} \,\,\,)
        - \frac{1}{r^2 \sinh^2\eta\sin^2\theta} \partial_{\phi}^2 \,.
    \end{aligned}
    \label{Laplacian}
\ee
We are looking for a separable solution for the above differential equation, namely of the type $\Psi(r,\eta,\theta,\phi) = u_r(r) \, u_{\eta}(\eta) \, u_{\theta}(\theta) \, u_{\phi}(\phi)$. 
First, we note that the non-radial part of the differential equation is exactly the Laplacian equation for the Casimir $C_1$ \eqref{CasimirC1_Laplace}. The timelike solutions are the eigenfunctions with eigenvalue $-1-\mu^2$:
\be
    \big(C_1 - (1+\mu^2)\big) \, (u_{\eta}(\eta) \, u_{\theta}(\theta) \, u_{\phi}(\phi)) = 0 \,.
\ee
Therefore, the rotational part gives the usual spherical harmonics $u_{\theta}(\theta) \, u_{\phi}(\phi) = Y^m_{\ell}(\theta,\phi)$.
While the radial and the hyperbolic contribution instead satisfy the equations
\be
    \begin{aligned}
        &
        \partial_{\eta}^2 \, u_{\eta} + 
        2\coth\eta \, \partial_{\eta} \, u_{\eta} -
        \frac{\ell(\ell+1)}{\sinh^2\eta} \, u_{\eta} = -
        (1+\mu^2) \, u_{\eta} \,,\\
        &
        \partial_r^2 \, u_r + \frac{3}{r} \partial_r \, u_r +
        \frac{1}{r^2} (1+\mu^2) \, u_r - r^2 \, u_r = -2E \, u_r \,.
    \end{aligned}
\ee
Two solutions for the hyperbolic part is given by the Legendre functions of first or second kind. According to \cite{Genest2013}, we take the solution of the second kind, given by:
\be
    u_{\eta}(\eta) = \frac{1}{\sinh\eta} \, Q_{\ell}^{i\mu}(\coth\eta) \,. 
\ee
A solution for the radial part is instead given in terms of the generalized Laguerre polynomials:
\be
    u_r(r) = r^{i\mu-1} e^{-\frac{1}{2}r^2} \, L^{(i\mu)}_{n_r}(r^2) \,,
\ee
with $E = 2n_r+i\mu+1$.
The general (non-normalized) eigenfunction of the four dimensional isotropic Lorentzian harmonic oscillator in the hyperbolic basis is thus given by 
\be
    \Psi_{n_r,\mu,\ell,m}(r,\eta,\theta,\phi) =
    r^{i\mu-1} e^{-\frac{1}{2}r^2} \, L^{(i\mu)}_{n_r}(r^2) \,
    \frac{1}{\sinh\eta} \, Q_{\ell}^{i\mu}(\coth\eta) \,
    Y^m_{\ell}(\theta,\phi) \,.
\ee
The eigenfunction in the spherical coordinates in terms of the eigenfunction in the Minkowskian coordinates \cite{hyperbolic_laplacian} writes
\be
    \begin{aligned}
        \Psi_{n_t,n_R,\ell,m}(t,R,\theta,\phi) =
        \sum_{n_{\rho},n_x,n_y,n_z} \, &
        \frac{i^{m + |m|} (-1)^{\tn_x+n_{\xi}} (\sigma_m i)^{n_y}}{2^{(1-\delta_{m,0})/2}} \,
        \\
        &
        \cC^{\frac{1+|m|}{2}, \frac{1}{4} + \frac{q_z}{2}, \frac{\ell}{2} + \frac{3}{4}}_{n_{\rho}, \tn_z, n_R} \,
        \cC^{\frac{1}{4} + \frac{q_x}{2}, \frac{1}{4} + \frac{q_y}{2}, \frac{1+|m|}{2}}_{\tn_x, \tn_y, n_{\rho}} \,
        \Psi_{n_t,n_x,n_y,n_z}(t,x,y,z) \,.
    \end{aligned}
\ee
Here $\sigma_m=sign(m)$ and $\cC^{\nu_1, \nu_2, \nu_{12}}_{n_1, n_2, n_{12}}$ is the $\su(1,1)$ Clebsh-Gordan coefficient defined as
\be
    \begin{aligned}
        \cC^{\nu_1, \nu_2, \nu_{12}}_{n_1, n_2, n_{12}} 
        = \,\, &
        \Big(\frac{(2\nu_1)_{n_1} (2\nu_2)_{n_2} (2\nu_1)_{n_1+n_2-n_{12}}}{n_1! n_2! n_{12}! (n_1+n_2-n_{12})! (2\nu_2)_{n_1+n_2-n_{12}} (2\nu_{12})_{n_{12}} (\nu_1+\nu_2+\nu_{12}-1)_{n_1+n_2-n_{12}}}\Big)^{1/2} 
        \\
        &
        (n_1+n_2)! \, 
        _{3}F_{2}
        \left(
        \begin{aligned}
            -n_1,\, -n_1-n_2 & +n_{12},\, \nu_1+\nu_2+\nu_{12} -1 \\ 
            &
            2\nu_1,\, -n_1-n_2
        \end{aligned}
        ; 1
        \right)
    \end{aligned}
    \label{ClebshGordan_su(1,1)}
\ee
where $_{m}F_n$ is the generalized hypergeometric function.

\paragraph{Realization of $\su(1,1)$ and $\so(1,3)$ algebras.}
Let us first remind the $\su(1,1)$ structure. Let $J_0,J_{\pm}$ be the $\su(1,1)$ generators obeying the commutation relations:
\be
    [J_0 \,,\, J_{\pm}] = \pm J_{\pm}
    \,\,,\quad
    [J_+ \,,\, J_-] = -2 J_0 \,,
    \label{su(1,1)Commutators}
\ee
with casimir operator $Q = J_0^2 - J_+J_- - J_0$. The $\su(1,1)$ irreducible representations are labelled by an integer $n \in \bbN$ and a real number $\nu \in \bbR$, where the action of the generators on it yields
\be
    \begin{aligned}
        &
        J_0 \, |\nu;n\ra = (\nu + n) \, |\nu;n\ra \,,\\
        &
        J_+ \, |\nu;n\ra = \sqrt{(n + 1)(n + \nu)} \, |\nu;n+1\ra \,,\\
        &
        J_- \, |\nu;n\ra = \sqrt{n(n + \nu-1)} \, |\nu;n-1\ra \,,\\
        &
        Q \, |\nu;n\ra = \nu(\nu-1) \, |\nu;n\ra \,. 
    \end{aligned}
    \label{su(1,1)Action}
\ee
In \cite{Genest2013} it was pointed that each one dimensional harmonic oscillator can be mapped into an $\su(1,1)$ sub-algebra. For the spacelike oscillators in the $\{x,y,z\}$ directions, the map between each of them and the $\su(1,1)$ basis is provided by $|n_a\ra \cong |1/4+q_a/2 \,;\, \tn_a\ra$, with $a = x,y,z$ and $n_a = 2\tn_a + q_a$. We show below how to derive such map for the harmonic oscillator in the $t$ direction. 
To this scope we use Dirac's work \cite{Dirac}; from this, we know that infinite dimensional representations of the Lorentz group are realized on homogeneous polynomials on Minkowski space with coordinates $\xi_{\mu}$, where the Lorentzian signature is implemented as a negative power of the time coordinate. Such polynomials can be recast as a four dimensional Lorentzian harmonic oscillator with constant energy\footnote{The request of constant energy for the harmonic oscillator is equivalent to the homogeneity condition of the polynomials.}. 
From Dirac's construction we have the relation \eqref{Map_OperatorCoordinates} between the harmonic oscillator ladder operators, the coordinates on Minkowski space $\xi_{\mu}$ and the harmonic oscillator coordinates $x_{\mu}$.
In order to derive the map between the quantum oscillator and the $\su(1,1)$ irreducible representations, we first consider the realization of the $\su(1,1)$ generators in the harmonic oscillator basis
\be
    J_0 = \frac{1}{2}(\adag_0 a_0 + 1/2) 
    \,\,,\quad
    J_+ = \frac{1}{2} (\adag_0)^2
    \,\,,\quad
    J_- = \frac{1}{2} (a_0)^2 \,.
    \label{su(1,1)-HarmonicOscillator}
\ee
The $\su(1,1)$ irreducible representation basis is diagonal with respect the operators $J_0,Q$, whose action is
\be
    \begin{aligned}
        J_0 \, |\nu;n\ra = (\nu + n) \, |\nu;n\ra 
        & \,\, := \,\,
        \frac{1}{2}(\adag_0 a_0 + 1/2) \, |n_t\ra =
        \frac{1}{2}(n_t + 1/2) \, |n_t\ra \,\\
        Q \, |\nu;n\ra = \nu(\nu-1) \, |\nu;n\ra
        & \,\, := \,\, 
        \Big(\frac{1}{4}(\adag_0 a_0 + 1/2)^2 -
        \frac{1}{4} (\adag_0)^2 (a_0)^2 -
        \frac{1}{2}(\adag_0 a_0 + 1/2) \Big) \, |n_t\ra =
        -\frac{3}{16} \, |n_t\ra \,. 
    \end{aligned}
\ee
Solving the system of equations $n+\nu = n_t/2+1/4$ and $\nu(\nu-1) = -3/16$, one derives the map
\be
    |\nu;n\ra \cong |1/4+q_t/2 \,;\, \tn_t\ra \,.
\ee
Notice that this map relates the $\su(1,1)$ sub-algebra characterized by $\nu = \{1/4,3/4\}$ to the time oriented harmonic oscillator whose real contribution $\on_t$ is restricted to the set of even numbers.

\paragraph{Hyperbolic vs Minkowskian basis.}
We would like now to obtain the basis of the four dimensional harmonic oscillator in the hyperbolic coordinates, as the tensor product between the three dimensional harmonic oscillator (in the spherical basis $|n_R,\ell,m\ra \cong |\ell/2+3/4 \,;\, n_R\ra$) and the time oriented harmonic oscillator $|n_t\ra \cong |1/4+q_t/2\,;\, \tn_t\ra$. Both can be mapped to an $\su(1,1)$ basis, thus, following again \cite{Genest2013}, we use the $\su(1,1)$ Clebsh-Gordan coefficients to derive their tensor product.
Concretely, we aim to obtain the relation
\be
    |n_r,\mu;\ell,m\ra = \sum_{n_t,n_R} \, 
    \la n_t,n_R,\ell,m|n_r,\mu;\ell,m\ra \, |n_t,n_R,\ell,m\ra \,,
    \label{ClebshGordanRelation}
\ee
for states with the same energy given by:
\be
    2n_r + i\mu + 1 = (n_t + 1/2) - (2n_R + \ell + 3/2) \,,
    \label{ConstantEnergyCondition}
\ee
along with the $\ell,m$ parameters which label the three dimensional rotational part that remains unchanged from the spherical to the hyperbolic basis.
We thus have to express the basis in the hyperbolic coordinates (written as an $\su(1,1)$ basis $|\nu;n\ra$) as a combination of the tensor product 
\be
    |n_t\ra \ot |n_R,\ell,m\ra \cong 
    |1/4+q_t/2\,;\, \tn_t\ra \ot |\ell/2+3/4 \,;\, n_R\ra \,.
\ee
To this aim, we write the generators of the tensor product basis as the difference\footnote{The minus sign of the linear combination reflects the Lorentzian signature.} of the generators of the two initial basis
\be
    J_0 = J_0^{t} \ot 1 - 1 \ot J_0^{xyz}
    \,\,,\quad
    J_{\pm} = J_{\pm}^{t} \ot 1 - 1 \ot J_{\mp}^{xyz} \,.
\ee
It is straightforward to check that each set of generators $\{J_0,J_{\pm}\}$, $\{J_0^t,J_{\pm}^t\}$ and $\{J_0^{xyz},J_{\pm}^{xyz}\}$ satisfy the commutation relations \eqref{su(1,1)Commutators} with respective Casimir operators $Q,Q^t,Q^{xyz}$ and actions \eqref{su(1,1)Action} on the respective basis. 
The tensor product basis $|n;\nu\ra = |n_r,\mu;\ell,m\ra$ is completely determined by requiring it to be diagonal with respect to the tensor product operators $J_0$ and $Q$. Using \eqref{ConstantEnergyCondition}, the first gives the condition
\be
    n + \nu = 
    \frac{1}{2}(n_t+1/2) - \frac{1}{2}(2n_R + \ell + 3/2) =
    n_r + \frac{1}{2}(i\mu + 1) \,.
    \label{J_0Condition}
\ee
Moreover, we demand the solutions $|n_r,\mu;\ell,m\ra$ to be a basis for the principal series of the irreducible representation of the Lorentz group.
We recall that the Lorentz algebra with generators $L_a,N_a$ has two Casimir operators \cite{BarrettCrane:1997BarrettCrane}:
\be
    C_1 = J^2 - N^2
    \,\,,\quad
    C_2 = 2 J \cdot N \,,
\ee
with respective eigenvalues of the principal series of the irreducible representation $C_1 = j^2 - \mu^2 - 1$ and $C_2 = 2j\mu$. 
Using \eqref{Lorentz-HarmonicOscillator}, one can directly check that the Casimir $C_2$ automatically vanishes. According to \cite{Baez_1998}, the vanishing of $C_2 = 2j\mu$ means that $|n_r,\mu;\ell,m\ra$ is a basis for the balanced representations of the Lorentz group. In particular, since we assumed the energy of the harmonic oscillator to be positive, the Casimir $C_1$ has to be negative, and thus $k=0$ which implies $C_1 = -1-\mu^2$. Hence, the basis $|n_r,\mu;\ell,m\ra$ provides a quantization for the spacelike bivectors.
Using \eqref{Lorentz-HarmonicOscillator}, from a direct computation one can check that
\be
    Q = \frac{1}{4} C_1 \,.
\ee
By enforcing this equation on the basis $|n_r,\mu;\ell,m\ra$ together with   \eqref{J_0Condition}, we end up with the system of equations
\be
    \left\{\,
    \begin{aligned}
        & n + \nu = n_r + \frac{1}{2}(i\mu+1) \,,\\
        & \nu(\nu-1) = -\frac{1}{4}(1+\mu^2) \,,
    \end{aligned}
    \right.
    \quad \Rightarrow \quad
    n = n_r
    \,\,,\quad
    \nu = \frac{1}{2}(1+i\mu) \,.
\ee
This gives the Clebsh-Gordan coefficients in \eqref{ClebshGordanRelation} for the harmonic oscillator from the spherical coordinates to the hyperbolic coordinates (the timelike principal series of the irreducible representations of the Lorentz group):
\be
    \la n_t,n_R,\ell,m \,|\, n_r,\mu;\ell,m \ra = e^{i\varphi} \, \cC^{1/4+q_t/2,\, \ell/2+3/4,\, (1+i\mu)/2}_{\tn_t+\on_t/2,\, n_R,\, n_r} \,,
\ee
where $e^{i\varphi}$ is a phase factor that can be fixed by requiring that the expansion \eqref{ClebshGordanRelation} holds also for the solutions \eqref{HyperbolicBasis}. The overall map of the four dimensional oscillator, between the Minkowskian basis and the hyperbolic basis is thus
\be
    \Psi_{n_r,\mu,\ell,m}(r,\eta,\theta,\phi) =
    \sum_{n_t,n_x,n_y,n_z} \, 
    \cC^{n_t,\, n_x,\, n_y,\, n_z}_{n_r,\, \mu,\, \ell,\, m} \, \Psi_{n_t,n_x,n_y,n_z}(t,x,y,z) \,,
\ee
with
\begin{align}
    \cC^{n_t,\, n_x,\, n_y,\, n_z}_{n_r,\, \mu,\, \ell,\, m} 
    & :=
    \la n_t,n_x,n_y,n_z | n_r,\mu;\ell,m \ra 
    \nonumber \\
    & =
    \sum_{n_R, n_{\rho}} \,
    \frac{i^{m + |m|} (-1)^{\tn_x+n_{\xi}} (\sigma_m i)^{n_y}}{2^{(1-\delta_{m,0})/2}} \,
    e^{i\varphi} \,\,
    \cC^{\frac{1+|m|}{2}, \frac{1}{4} + \frac{q_z}{2}, \frac{\ell}{2} + \frac{3}{4}}_{n_{\rho}, \tn_z, n_R} \,
    \cC^{\frac{1}{4} + \frac{q_x}{2}, \frac{1}{4} + \frac{q_y}{2}, \frac{1+|m|}{2}}_{\tn_x, \tn_y, n_{\rho}} \,
    \cC^{1/4+q_t/2,\, \ell/2+3/4,\, (1+i\mu)/2}_{\tn_t+\on_t/2,\, n_R,\, n_r} \,.
\end{align}

\newpage
\bibliography{biblio.bib}
\bibliographystyle{jk.bst}
\end{document}